\theoremstyle{definition} 
\newtheorem{defin}{Definition}[section]
\newtheorem{thm}[defin]{Theorem}
\newtheorem{rem}[defin]{Remark}
\newtheorem{ex}[defin]{Example}
\newtheorem{cor}[defin]{Corollary}
\newtheorem{lemma}[defin]{Lemma}
\renewcommand{\theHWexercise}{\theHWexercise$^*$}
\def\theta{\vartheta}
\def\hil{{\mathcal H}}
\def\A{{\mathcal A}}
\def\B{{\mathcal B}}
\def\C{{\mathcal C}}
\def\F{{\mathcal F}}
\def\I{{\mathcal I}}
\def\J{{\mathcal J}}
\def\S{{\mathcal S}}
\def\T{{\mathcal T}}
\def\half{\frac{1}{2}}
\def\iff{\Longleftrightarrow}
\def\imp{\Longrightarrow}
\def\bN{\mathbb{N}}
\def\bC{\mathbb{C}}
\def\bR{\mathbb{R}}
\def\bZ{\mathbb{Z}}
\def\bT{\mathbb{T}}
\def\bz{\left(}
\def\jz{\right)}
\def\egy{\mathbf 1}
\def\what{\widehat}
\def\rho{\varrho}
\def\test{\mathrm{test}}
\def\oll{\overline}
\def\valt{\cdot}
\def\n{\langle n \rangle}
\def\transl{\mathrm{trans}}
\def\lep{\mu}
\def\uep{\nu}
\def\hh{\chi}
\newcommand{\ki}[1]{\textit{\textit{#1}}}
\newcommand{\s}{\mbox{ }}
\newcommand{\ds}{\mbox{ }\mbox{ }}
\newcommand{\fv}{\hat}
\newcommand{\norm}[1]{\left\| #1\right\|}
\newcommand{\inner}[2]{\left\langle #1 , #2\right\rangle}
\newcommand{\abs}[1]{\left| #1 \right|}
\newcommand{\diad}[2]{\left|#1\right\rangle\!\left\langle #2\right|}
\newcommand{\pr}[1]{\diad{#1}{#1}}
\newcommand{\floor}[1]{\left\lfloor #1\right\rfloor}
\newcommand{\fock}[1]{\Gamma(#1)}
\newcommand{\fockop}[1]{#1_{F}}
\newcommand{\tests}[1]{\mathbb{T}\bz #1\jz}
\renewcommand{\phi}{\varphi}
\renewcommand{\theta}{\vartheta}
\renewcommand{\epsilon}{\varepsilon}
\newcommand{\dd}{\mathrm{d}}
\newcommand{\integers}{\mathbb{Z}}
\newcommand{\hilb}{\mathcal{H}}
\newcommand{\bounded}[1]{\mathcal{B}\left(#1\right)}
\newcommand{\car}[1]{\mathrm{CAR}\left(#1\right)}
\newcommand{\anticomm}[2]{\left\{#1,#2\right\}}
\renewcommand{\p@enumii}{}
\DeclareMathOperator{\Tr}{Tr}
\DeclareMathOperator{\supp}{supp}
\DeclareMathOperator{\ran}{ran}
\DeclareMathOperator{\spann}{span}
\DeclareMathOperator{\ootimes}{\otimes\ldots\otimes}
\DeclareMathOperator{\conv}{\star}
\DeclareMathOperator{\divv}{\Delta}
\DeclareMathOperator{\Ft}{\F}
\DeclareMathOperator*{\medotimes}{\scalerel*{\otimes}{\textstyle\sum}}
\DeclareMathOperator*{\medoplus}{\scalerel*{\oplus}{\textstyle\sum}}
\begin{document}

\title{Super-exponential distinguishability of correlated quantum states}

\author{Gergely Bunth}
\email{gbunthy@gmail.com}

\author{G\'abor Mar\'oti}
\email{marotigabor1995@gmail.com}

\author{Mil\'an Mosonyi}
\email{milan.mosonyi@gmail.com}

\affiliation{
MTA-BME ``Lend\"ulet'' Quantum Information Theory Research Group
}

\affiliation{
Department of Analysis, Institute of Mathematics,\\
 Budapest University of Technology and Economics, M\H uegyetem rkp.~3., H-1111 Budapest, Hungary
}

\author{Zolt\'an Zimbor\'as}
\email{zimboras.zoltan@wigner.hu}

\affiliation{
MTA-BME ``Lend\"ulet'' Quantum Information Theory Research Group
}

\affiliation{Wigner Research Centre for Physics, H-1525, P.O.Box 49, Budapest, Hungary}

\begin{abstract}
\centerline{\textbf{Abstract}}
\vspace{.3cm}

In the problem of asymptotic binary i.i.d.~state discrimination,
the optimal asymptotics of the type I and the type II error probabilities is in general an exponential decrease to 
zero as a function of the number of samples; the set of achievable exponent pairs is 
characterized by the quantum Hoeffding bound theorem. 
A super-exponential decrease for both types of error probabilities is 
only possible in the trivial case when the two states are orthogonal, and hence 
can be perfectly distinguished using only a single copy of the system. 

In this paper we show that a qualitatively different behaviour 
can occur when there is correlation between the samples. 
Namely, we use gauge-invariant and translation-invariant quasi-free states
on the algebra of the canonical anti-commutation relations
to exhibit pairs of states on an infinite spin chain with the 
properties that a) all finite-size restrictions of the states have invertible density operators, and b) the type I and the type II error probabilities both decrease to zero 
at least with the speed 
$e^{-nc\log n}$ with some positive constant $c$, i.e., with a super-exponential speed
in the sample size $n$.
Particular examples of such states include the ground states of the $XX$ model corresponding to different transverse magnetic fields.
In fact, we prove our result in the setting of binary composite hypothesis testing, and hence it can be applied to prove super-exponential distinguishability of the hypotheses that the 
transverse magnetic field is above a certain threshold vs.~that it is below a strictly lower
value. 
\end{abstract}

\maketitle

\section{Introduction}

In the problem of simple binary state discrimination, an experimenter is presented with a 
quantum system that is either in some state $\omega^{(0)}$ or in another state 
$\omega^{(1)}$.
The experimenter's task is to guess which one the true state of the system is, based on 
measurements on the system. It is easy to see that even the most elaborate measurement 
and classical post-processing scheme cannot outperform single $2$-outcome (binary) 
measurements when the goal is to minimize the probability of an erroneous decision. 
More precisely, there are two types of error probabilities to consider:
erroneously identifying the state as $\omega^{(1)}$ (type I error), or 
erroneously identifying the state as $\omega^{(0)}$ (type II error), 
and the goal is to minimize some combination of the two. 
It is easy to see that (in the finite-dimensional case, at least), 
perfect discrimination (i.e., when both error probabilities are zero) is possible if and only 
if the density operators of the two states have orthogonal supports. 

The error probabilities can be reduced if the experimenter has access to multiple identical 
copies of the system, and in the asymptotic analysis of the problem one is interested in 
the achievable asymptotic behaviours of the two error probabilities along all possible 
sequences of binary measurements (tests) as the number of copies tends to infinity. 
In general, the best achievable asymptotics is an exponential decrease to zero for both error 
probabilities; the set of the achievable exponent pairs is described by the 
quantum Hoeffding bound theorem \cite{ANSzV,Hayashicq,Nagaoka}. Faster
(super-exponential) decrease is possible if and only if the supports of the states are different. 
For instance, if $\supp\omega^{(1)}\not\subseteq\supp\omega^{(0)}$ then there exists a test sequence 
along which the type I error is constant zero (hence its exponent is $+\infty$), while the type II error 
decreases exponentially fast (with the exponent being the R\'enyi zero-divergence of 
$\omega^{(0)}$ and $\omega^{(1)}$). A faster than exponential decrease for both error 
probabilities is possible if and only if the supports are orthogonal, in which case 
both errors can be made zero trivially for any finite number of copies. 

The above are well-known in the i.i.d.~(independent and identically distributed) case, i.e., 
when all the samples are prepared in the same state, and there is no correlation between the 
different samples. Correlated scenarios can be conveniently described using the concept of 
the $C^*$-algebra of an infinite spin chain, $\C_{\bZ}=\otimes_{k\in\bZ}\B(\hil)$, 
where $\hil$ is a 
finite-dimensional Hilbert space describing a single system.
In this case the candidate states $\omega^{(0)}$ and $\omega^{(1)}$ can be described by positive 
linear functionals on $\C_{\bZ}$ that take $1$ on the identity; their restrictions
to any sub-algebra $\otimes_{k\in\Lambda}\B(\hil)$ corresponding to a 
finite subset $\Lambda$ of samples (equivalently, a finite part of the chain)
can be described by density operators in the usual way. A state on the infinite chain
is translation-invariant if the density operator
of any finite subsystem $\Lambda$ is the same as that of any of its translates; in particular, 
the single-site density operators are all the same (i.e., 
the outcomes of the same measurement performed at different sites are
identically distributed). 
In this picture, a measurement on $n$ consecutive samples is described by 
a measurement on a length $n$ part of the chain, and the asymptotics is 
studied in the setting where this length is allowed to go to infinity. 
Obviously, error exponents are more difficult to determine in the correlated scenario, but
rather general results are available in the setting of Stein's lemma, where 
one of the errors is not required to decrease exponentially
\cite{Sanov_corr}, and in the setting of the Hoeffding bound
for thermal states of translation-invariant finite-range Hamiltonians,
and more generally, for states that satisfy a certain factorization property
\cite{HMO2}. In these cases, however, the entropic quantities
(Umegaki- and quantum R\'enyi relative entropies)
characterizing the achievable exponent pairs are given by regularized formulas, 
and cannot be explicitly computed in general. 

A particular class of correlated states where explicit formulas are available 
can be obtained from translation-invariant and gauge-invariant quasi-free states
on the algebra of canonical anti-commutation relations (CAR algebra).
Such a state
is specified by a measurable function on $[0,2\pi)$ with values in $[0,1]$,
called the symbol of the state; see Section \ref{sec:prelim} below for details.
The achievable exponent pairs were determined for a pair of such states
in \cite{MHOF}, with explicit expressions for the relevant entropic expressions, in the case 
where the symbols of the two states, denoted by $\fv q$ and $\fv r$,
are bounded away from $0$ and $1$ in the sense that 
$\eta\le\fv q(x),\fv r(x)\le 1-\eta$ for all $x\in[0,2\pi)$ for some
$\eta>0$.
In this case the regularized quantum R\'enyi $\alpha$-divergences of the two states 
are finite for every $\alpha>0$, and the best achievable asymptotics is an exponential decay for both error probabilities.

Our main contribution in this paper is showing that for certain pairs of quasi-free states,
super-exponential discrimination is possible. More precisely, we show that 
if the symbols $\fv q$ and $\fv r$ are such that 
there exists a non-degenerate interval on which $\fv q$ is constant $0$ and 
$\fv r$ is constant $1$ then there exists a sequence of tests along which both error 
probabilities decrease at least with the speed $e^{-nc\log n}$,
where $n$ is the sample size.
In the same time, unless $\fv q$ is constant zero 
and $\fv r$ is constant $1$ (up to sets of measure zero), then all the local densities of both states
are invertible, and hence it is not only impossible to make both error probabilities vanish 
for a finite sample size, but if one of the error probabilities is made zero then the other is necessarily equal to $1$. 
This is very different from what can be seen in the i.i.d.~case, and to the best of our knowledge, this is the first time that such a behaviour is presented in the literature. 
%

The structure of the paper is as follows.
In Section \ref{sec:CAR} we review the necessary basics about quasi-free states on the CAR 
algebra. In Section \ref{sec:state disc} we explain the notions of error exponents and 
super-exponential distinguishability for translation-invariant states on the spin chain 
and on the CAR algebra. In Section \ref{sec:superexp} we prove our main result described above. In fact, we state and prove a more general result in the framework of composite
state discrimination, showing super-exponential 
distinguishability of two sets of quasi-free states with invertible local density operators.
In Section \ref{sec:ort} we give various characterizations of super-exponential 
distinguishability of states in terms of regularized divergences.

\section{Preliminaries}
\label{sec:prelim}

\subsection{Quasi-free states on the CAR algebra}
\label{sec:CAR}

Here we summarize the necessary basics about quasi-free states on the CAR algebra. For more details and proofs we refer to \cite{BR2,Fannes-CAR,AlickiFannes-book,dierckx2008quasifree,ZZK2014}.

For a complex Hilbert space $\hil$, we will denote the set of bounded operators on 
$\hil$ by $\B(\hil)$, and use the notation $\tests{\hil}:=\{T\in\B(\hil):\,0\le T\le I\}$ for 
the set of \ki{tests} on $\hil$.

For vectors $\phi_1,\ldots,\phi_k$ in a complex Hilbert space $\hil$, let 
\begin{align*}
\phi_1\wedge\ldots\wedge\phi_k:=\frac{1}{\sqrt{k!}}\sum_{\sigma\in \mathfrak{S}_k}\varepsilon(\sigma)\phi_{\sigma(1)}\ootimes\phi_{\sigma(k)}
\end{align*}
denote their anti-symmetrized tensor product, where
$\mathfrak{S}_k$ stands for the set of permutations of $k$ elements and $\varepsilon(\sigma)$ 
for the sign of the permutation $\sigma$. For any $k\in\bN\setminus\{0\}$,
the $k$-th anti-symmetric tensor power of $\hil$ is
\begin{align*}
\hil^{\wedge k}:=\oll{\spann\{\phi_1\wedge\ldots\wedge\phi_k:\,\phi_i\in\hil,\,i=1,\ldots,k\}},
\end{align*}
where the overline denotes the closure in operator norm, 
and we define $\hil^{\wedge 0}:=\bC$.
The Hilbert space of a fermionic system with single-particle Hilbert space $\hil$ is
the \ki{anti-symmetric Fock space}
\begin{align*}
\fock{\hil}:=\medoplus_{k\in\bN}\hil^{\wedge k},
\end{align*}
where $\hil^{\wedge k}=\{0\}$ for every $k>\dim\hil$. 
For an operator $A\in\bounded{\hil_1,\hil_2}$, let
$A^{\wedge k}:= {\left.A^{\otimes k}\right|}_{\hil_1^{\wedge k}}$ as an operator
on $\hil_1^{\wedge k}$, and 
\begin{align*}
\fockop{A} &:= \medoplus_{k\in\bN}A^{\wedge k},
\end{align*}
with $A^{\wedge 0}:=A^{\otimes 0}:=1\in\B(\bC)$. Clearly, $\fockop{A}$ is bounded if and only 
if $\dim\hil<+\infty$ or $\norm{A}\le 1$. 
If $V:\,\hil_1\to\hil_2$ is an isometry/unitary 
then $\fockop{V}$ is an isometry/unitary from $\fock{\hil_1}$ into/onto
$\fock{\hil_2}$ with the property $\fockop{V}\hil_1^{\wedge k}\subseteq\hil_2^{\wedge k}$.

For each $\phi\in\hilb$, the corresponding \emph{creation operator} 
$c(\phi)$ is the unique bounded linear extension of the map
\begin{align*}
\phi_1\wedge\ldots\wedge\phi_k\mapsto\phi\wedge\phi_1\wedge\ldots\wedge\phi_k,
\ds\ds\ds\phi_1,\ldots,\phi_k\in\hil,
\end{align*}
and the corresponding \emph{annihilation operator} is its adjoint, 
$a(\phi):=c(\phi)^*$. These operators satisfy the \emph{canonical anti-commutation relations (CARs)},
\begin{align} \label{eqn:cars}
\anticomm{a(\phi)}{a(\psi)}= 0, \ds\ds\ds 
\anticomm{a(\phi)}{a^*(\psi)}= \inner{\phi}{\psi}I,\ds\ds \phi,\psi\in\hil.
\end{align}
The C*-algebra generated by $\{a(\phi):\,\phi\in\hil\}$ 
is called the \emph{algebra of the canonical anti-commutation relations} (or \emph{CAR-algebra}) corresponding to the single-particle Hilbert space $\hil$, and is denoted by $\car{\hil}$.
Note that $\phi\mapsto c(\phi)$ is complex linear and $\phi\mapsto a(\phi)$ is complex 
anti-linear. Thus, if $\hil$ is separable and $(e_i)_{i=1}^{\dim\hil}$ is an 
orthonormal basis (ONB) in it then 
$\car{\hil}$ is the closure of the linear span of the identity and all the 
multinomials of the form $a(e_{i_1})^*\ldots a(e_{i_n})^*a(e_{j_m})\ldots a(e_{j_1})$,
$i_1<\ldots<i_n$, $j_1<\ldots<j_m$. For any isometry/unitary $V:\,\hil_1\to\hil_2$,
$\fockop{V}(\valt)\fockop{V}^*$ is a homomorphism/isomorphism from $\car{\hil_1}$ to $\car{\hil_2}$ with the property $\fockop{V}a(\phi)\fockop{V}^*=a(V\phi)$, $\phi\in\hil_1$.
The \ki{even part} $\car{\hil}_+$ of $\car{\hil}$ is the sub-algebra left invariant by the 
\ki{parity automorphism} $\fockop{(-I)}(\valt)\fockop{(-I)}$. This is exactly the closure of the linear span 
of all multinomials with an even number of terms. 

If $\hil$ is finite-dimensional and $e_1,\ldots,e_d$ is an orthonormal basis in $\hil$ then 
$\{e_{i_1}\wedge\ldots\wedge e_{i_k}:\,1\le i_1<\ldots<i_k\le d\}$ is
an ONB in $\hil^{\wedge k}$; in particular, 
$\dim\hil^{\wedge k}=\binom{d}{k}$ and $\dim\F(\hil)=2^{\dim\hil}$. 
Let $\ket{0}=\begin{bmatrix} 1 \\ 0 \end{bmatrix}$, 
$\ket{1}=\begin{bmatrix} 0 \\ 1 \end{bmatrix}$ be the canonical ONB of $\bC^2$. Then 
\begin{align}\label{JW1}
U_e:\,e_{i_1}\wedge\ldots\wedge e_{i_k}\mapsto \otimes_{j=1}^d\ket{x_j},\ds\ds
x_j:=\begin{cases}1,&j\in\{i_1,\ldots,i_k\},\\ 0,&\text{otherwise},\end{cases}
\end{align}  
is a unitary from $\fock{\hil}$ to $(\bC^2)^{\otimes d}$, and   
\begin{align*}
U_ea(e_j)^*U_e^*=\underbrace{\sigma_z\ootimes\sigma_z}_{j-1\text{ times}}\otimes\begin{bmatrix}0 & 0\\ 1 & 0\end{bmatrix}\otimes\underbrace{I\ootimes I}_{d-j\text{ times}}
\end{align*}
is easy to verify.
The map $U_e(\valt)U_e^*:\,\B(\F(\hil))=\car{\hil}\to\B((\bC^2)^{\otimes d})=\B(\bC^2)^{\otimes d}$ is called the \ki{Jordan-Wigner isomorphism} corresponding to the given ONB.
The \ki{particle number operator} is 
\begin{align*}
N_{\hil}:=\medoplus_{k=0}^{d}kI_{\hil^{\wedge k}}=
\sum_{i=1}^{d}a(e_i)^*a(e_i)
=
U_e^*\bz\sum_{i=1}^d
\underbrace{I\ootimes I}_{i-1\text{ times}}\otimes\begin{bmatrix}0 & 0\\ 0 & 1\end{bmatrix}\otimes\underbrace{I\ootimes I}_{d-i\text{ times}}\jz
U_e.
\end{align*}
The eigen-values of $N_{\hil}$ are $0,\ldots,d$, with spectral projections
\begin{align}\label{numop spectral}
P^{N_{\hil}}_k=
\underbrace{0\oplus\ldots\oplus 0}_{k\text{ times}}\oplus I_{\hil^{\wedge k}}\oplus \underbrace{0\oplus\ldots\oplus 0}_{d-k\text{ times}}
=
U_e^*\bz\sum_{\Lambda\subseteq[d],\,|\Lambda|=k}
\bz\medotimes_{i\in\Lambda}\begin{bmatrix}0 & 0\\ 0 & 1\end{bmatrix}\jz
\medotimes
\bz\medotimes_{i\in[d]\setminus\Lambda}\begin{bmatrix}1 & 0\\ 0 & 0\end{bmatrix}\jz\jz 
U_e,
\end{align}
where $[d]:=\{1,\ldots,d\}$.
Note that $N_{\hil}$ is defined in a basis-independent way, and the equalities above are valid for any ONB.

A \ki{state} on $\car{\hil}$ is a positive linear functional that takes the value $1$ on $I$.    
For any positive semi-definite (PSD) operator $Q\in\B(\hil)$ with $Q\le I$ there exists a unique state $\omega_Q$ on $\car{\hil}$ 
(called the \ki{gauge-invariant quasi-free state with symbol $Q$})
with the property
\begin{align} \label{eqn:quasifree_functional}
\omega_Q\left(a(\phi_1)^*\ldots a(\phi_n)^*a(\psi_m)\ldots a(\psi_1)\right) 
= \delta_{mn}\det\left\{\Braket{\psi_i|Q\phi_j}\right\}_{i,j=1}^n.
\end{align}
It is easy to verify that when $\hil$ is finite-dimensional, the density operator 
$\what\omega_Q$ of $\omega_Q$ can be explicitly given as
\begin{align}\label{quasifree density}
\what\omega_Q=\prod_{j=1}^d\bz q_ja(e_j)^*a(e_j)+(1-q_j)a(e_j)a(e_j)^*\jz
=
U_e^*\bz\medotimes_{j=1}^d\begin{bmatrix}1-q_j & 0 \\ 0 & q_j\end{bmatrix}\jz U_e,
\end{align}
where $Q=\sum_{j=1}^d q_j\pr{e_j}$ is any eigen-decomposition of $Q$, and
$U_e$ is the unitary corresponding to the ONB $(e_j)_{j=1}^d$ as in \eqref{JW1}.
Note that for all $1\le i_1<\ldots<i_k\le d$, $e_{i_1}\wedge\ldots\wedge e_{i_k}$
is an eigen-vector of $\what\omega_Q$ with eigen-value
$\bz\prod_{j\in\{i_1,\ldots,i_k\}}q_j\jz\cdot\bz \prod_{j\in[d]\setminus\{i_1,\ldots,i_k\}}(1-q_j)\jz$. 
This implies immediately that if $1$ is not an eigen-value of $Q$ then 
$\what\omega_Q$ can be written as 
\begin{align*}
\what\omega_Q=\det(I-Q)\fockop{\bz\frac{Q}{I-Q}\jz}.
\end{align*} 
Quasi-free states emerge as equilibrium states of non-interacting fermionic systems. For instance, if the single-particle Hamiltonian $H$ of a system of non-interacting fermions 
is such that $e^{-\beta H}$ is trace-class 
then 
the Gibbs state of the system at inverse temperature $\beta$ is the quasi-free state with symbol
$Q=\frac{e^{-\beta H}}{I+e^{-\beta H}}$ (see, e.g., \cite[Proposition 5.2.23]{BR2}).

Consider now a fermionic chain with a single mode at each site. The
single-particle Hilbert space of this system is $\hil=\ell^2(\integers)$, the standard basis 
of which we denote by $\{\egy_{\{k\}}:\,k\in\integers\}$. The \emph{translation operator} is 
the unitary $U^{\transl}=\sum_{k\in\bZ}\diad{\egy_{\{k+1\}}}{\egy_{\{k\}}}$, and $\tau(\valt):=\fockop{U^{\transl}}(\valt)\fockop{(U^{\transl})}^*$ gives an automorphism of 
$\car{\ell^2(\integers)}$ with the property  
$\tau(a(\phi))=a(U^{\transl}\phi)$, $\phi\in\hil$. 
A quasi-free state $\omega_Q$ is called \emph{translation-invariant} if 
$\omega_Q\circ\tau=\omega_Q$, which is easily seen to be equivalent to 
$U^{\transl}Q=QU^{\transl}$, i.e., the translation-invariance of the symbol $Q$. For instance, in the above example a translation-invariant single-particle Hamiltonian $H$ yields a translation-invariant quasi-free state as the equilibrium state of the system. 
Translation-invariant operators on $\ell^2(\bZ)$ commute with each other and they are simultaneously diagonalized by the Fourier transformation
\begin{align*}
\Ft:\,\ell^2(\bZ)\to L^2([0,2\pi)),\ds 
\Ft\egy_{\{k\}}:=\chi_{k},\ds 
\chi_{k}(x):=\frac{e^{ikx}}{\sqrt{2\pi}},\ds x\in[0,2\pi),\ds k\in\bZ.
   \end{align*}
That is,
every translation-invariant operator $A$ arises in the form $A=\Ft^*M_{\fv a}\Ft$, where $M_{\fv a}$ denotes the multiplication operator by a bounded measurable function $\fv a$ on $[0,2\pi)$. As a consequence, the matrix entries of translation-invariant operators 
in the canonical ONB
are constants along diagonals; more explicitly, 
for any translation-invariant operator $A\in\B(\ell^2(\bZ))$,
\begin{align}\label{Toeplitz}
A_{k,j}:=\inner{\egy_{\{k\}}}{A\egy_{\{j\}}}=
\frac{1}{2\pi}\int_{[0,2\pi)}e^{-i(k-j)x}\fv a(x)\,\dd x
,\ds\ds\ds k,j\in\bZ.
\end{align}

A measurement on a subsystem corresponding to modes at the sites $\n:=\{0,\ldots,n-1\}$ has
measurement operators in the 
$C^*$-subalgebra 
$\A_n\subseteq\car{\ell^2(\bZ)}$ generated by $\{a(\phi):\,\phi\in\hil_n\}$,
\begin{align*}
\hil_n:=\spann\{\egy_{\{k\}}:\,k\in\n\}\subseteq \ell^2(\bZ).
\end{align*}
This subalgebra is naturally isomorphic to $\car{\bC^{\n}}$.
It is easy to see that if the 
state of the infinite chain is given by a quasi-free state with symbol $Q$ then 
the statistics of any such local measurement is given by the quasi-free state 
$\omega_{Q_n}$ on $\car{\bC^{\n}}$
with symbol $Q_n:=V_n^*QV_n$, where 
$V_n$ is the natural embedding of $\bC^{\n}$ into $\ell^2(\bZ)$. 

\begin{lemma} \label{lem:subsys_eigen}
Let $\fv a:[0,2\pi)\to [0,+\infty)$ be a non-negative bounded measurable function, and let $A=\Ft^*M_{\fv a}\Ft$ be the corresponding translation-invariant operator on $\ell^2(\bZ)$. 
The following are equivalent:
\begin{enumerate}
\item\label{1 ev 1}
$0$ is an eigen-value of $V_n^*AV_n$ for some $n\in\bN$; 
\item\label{1 ev 2}
$0$ is an eigen-value of $V_n^*AV_n$ for every $n\in\bN$; 
\item\label{1 ev 3}
$A=0$;
\item\label{1 ev 4}
$\fv a$ is equal to $0$ almost everywhere.
\end{enumerate}
\end{lemma}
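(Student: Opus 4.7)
The implications $(iii) \Leftrightarrow (iv)$ are immediate: $M_{\fv a}$ is the zero operator on $L^2([0,2\pi))$ if and only if $\fv a = 0$ almost everywhere. The implications $(iii) \Rightarrow (ii) \Rightarrow (i)$ are also trivial, since if $A = 0$ then $V_n^* A V_n = 0$, so every vector is a null eigenvector. The only substantial direction is therefore $(i) \Rightarrow (iv)$ (or equivalently $(i) \Rightarrow (iii)$), and this is where I would concentrate the work.

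The plan is to use positivity together with the Fourier picture. Since $\fv a \geq 0$, we have $A = M_{\fv a} \geq 0$ and hence $V_n^* A V_n \geq 0$. Suppose there exists $n \in \bN$ and a nonzero $\psi \in \bC^{\n}$ with $V_n^* A V_n \psi = 0$. Then
\begin{align*}
0 = \inner{\psi}{V_n^* A V_n \psi} = \inner{V_n\psi}{A V_n\psi} = \snorm{A^{1/2} V_n\psi}^2,
\end{align*}
so $A V_n \psi = 0$ in $\ell^2(\bZ)$. Unitarily transporting by $\Ft$ turns this into $M_{\fv a}(\Ft V_n\psi) = 0$ in $L^2([0,2\pi))$, i.e.
\begin{align*}
\fv a(x)\cdot f(x) = 0 \quad\text{almost everywhere},\qquad f(x) := \frac{1}{\sqrt{2\pi}}\sum_{k=0}^{n-1}\psi_k e^{ikx}.
\end{align*}

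Now the key observation is that $f$ is the restriction to the unit circle of the polynomial $z \mapsto \frac{1}{\sqrt{2\pi}}\sum_{k=0}^{n-1}\psi_k z^k$, which is nonzero (since $\psi\neq 0$) of degree at most $n-1$. Hence $f$ has at most $n-1$ zeros on $[0,2\pi)$, so $f(x) \neq 0$ for almost every $x$. Combined with $\fv a f = 0$ a.e., this forces $\fv a = 0$ a.e., which is $(iv)$, and then $A = M_{\fv a} = 0$, which is $(iii)$.

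There is no real obstacle: the argument rests entirely on the combination of operator positivity (to promote a scalar vanishing to a vector vanishing) and the finiteness of the zero set of a nonzero trigonometric polynomial. The only thing to be careful about is ensuring $\psi$ is chosen nonzero (which is automatic for an eigen-vector) and that the resulting $f$ is viewed as an analytic trigonometric polynomial rather than just an $L^2$-class. With these standard remarks, the chain $(i) \Rightarrow (iv) \Rightarrow (iii) \Rightarrow (ii) \Rightarrow (i)$ is complete.
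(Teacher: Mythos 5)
Your argument is correct and coincides with the paper's own proof: both reduce everything to the implication \ref{1 ev 1}$\imp$\ref{1 ev 4}, use positivity of $A=M_{\fv a}$ to turn the vanishing of $\inner{\psi}{V_n^*AV_n\psi}$ into the a.e.\ vanishing of $\fv a$ (or $\fv a^{1/2}$) times $\Ft V_n\psi$, and then invoke the fact that a non-zero trigonometric polynomial has only finitely many zeros. The only cosmetic difference is that the paper works directly with $M_{\fv a^{1/2}}\Ft V_n\psi$ rather than passing through $AV_n\psi=0$, which changes nothing of substance.
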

\begin{proof} 
The equivalence \ref{1 ev 4}$\iff$\ref{1 ev 3} is obvious, as are the implications
\ref{1 ev 3}$\imp$\ref{1 ev 2}$\imp$\ref{1 ev 1}, and hence we only need to prove 
\ref{1 ev 1}$\imp$\ref{1 ev 4}. 
Assume therefore that $V_n^*AV_n\psi=0$ for some $\psi\in\bC^{\n}\setminus\{0\}$. Then 
\begin{align*}
0=\inner{\psi}{V_n^*AV_n\psi}=\norm{A^{1/2}V_n\psi}^2
=
\norm{\Ft A^{1/2}\Ft^*\Ft V_n\psi}^2=
\norm{M_{\fv a^{1/2}}\Ft V_n\psi}^2,
\end{align*}
whence $\fv a^{1/2}\Ft V_n\psi=0$ almost everywhere. Since $\Ft V_n\psi$ is a non-zero 
trigonometric polynomial that can only have finitely many zeros, this implies that 
$\fv a$ is $0$ almost everywhere.
\end{proof}

\begin{cor}\label{cor:invertible density}
Let $Q=\Ft^*M_{\fv q}\Ft\in\B(\ell^2(\bZ))$ be the symbol of a translation-invariant quasi-free 
state. If $\fv q$ is neither almost everywhere zero 
nor almost everywhere $1$ then for every  $n\in\bN$,
$\what\omega_{Q_n}$ is an invertible density operator on $\fock{\bC^{\n}}$.
\end{cor}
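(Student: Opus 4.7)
The plan is to reduce everything to Lemma \ref{lem:subsys_eigen} and the explicit product form of $\what\omega_{Q_n}$ in \eqref{quasifree density}. Since the quasi-free density on $\fock{\bC^{\n}}$ factors as $\what\omega_{Q_n}=U_e^*\bz\medotimes_{j=1}^n\begin{bmatrix}1-q_j^{(n)} & 0 \\ 0 & q_j^{(n)}\end{bmatrix}\jz U_e$ for any eigen-decomposition $Q_n=\sum_j q_j^{(n)}\pr{e_j}$ with $U_e$ unitary, invertibility of $\what\omega_{Q_n}$ is equivalent to every eigenvalue $q_j^{(n)}$ of $Q_n=V_n^*QV_n$ lying strictly in the open interval $(0,1)$.

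So the task splits into two symmetric claims: (a) $0$ is not an eigenvalue of $Q_n$, and (b) $1$ is not an eigenvalue of $Q_n$. For (a), note that $Q=M_{\fv q}$ with $\fv q\ge 0$ bounded and measurable, so Lemma \ref{lem:subsys_eigen} applies directly: the hypothesis that $\fv q$ is not almost everywhere $0$ forces $0$ not to be an eigenvalue of $V_n^*QV_n=Q_n$. For (b), I would apply the same lemma to the operator $A:=I-Q=M_{1-\fv q}$, whose symbol $1-\fv q$ is likewise non-negative, bounded, and measurable. Observe that $V_n^*AV_n=I_n-Q_n$, so $1$ is an eigenvalue of $Q_n$ if and only if $0$ is an eigenvalue of $V_n^*AV_n$. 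Since $\fv q$ is not almost everywhere $1$, $1-\fv q$ is not almost everywhere $0$, and Lemma \ref{lem:subsys_eigen} rules this out.

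Combining (a) and (b), every eigenvalue of $Q_n$ lies in $(0,1)$, so in the product formula \eqref{quasifree density} each two-by-two block $\begin{bmatrix}1-q_j^{(n)} & 0\\ 0 & q_j^{(n)}\end{bmatrix}$ is strictly positive definite. The tensor product of strictly positive operators is strictly positive, and conjugation by the unitary $U_e$ preserves invertibility, so $\what\omega_{Q_n}$ is an invertible density operator on $\fock{\bC^{\n}}$.

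There is essentially no obstacle here: the only non-routine input is Lemma \ref{lem:subsys_eigen}, and the proof is simply the observation that the lemma can be run on both $Q$ and $I-Q$, which together pin the spectrum of each finite truncation into the open interval where the quasi-free density is invertible.
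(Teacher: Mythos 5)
Your proof is correct and follows essentially the same route as the paper: apply Lemma \ref{lem:subsys_eigen} to $\fv q$ to exclude $0$ from the spectrum of $Q_n$, apply it to $1-\fv q$ (noting $V_n^*(I-Q)V_n=I-Q_n$) to exclude $1$, and conclude via the explicit formula \eqref{quasifree density}. No issues.
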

\begin{proof}
Applying Lemma \ref{lem:subsys_eigen} to $\fv a:=\fv q$ yields that
$0$ is not an eigen-value of $Q_n$ for any $n\in\bN$. 
Applying Lemma \ref{lem:subsys_eigen} to $\fv a:=1-\fv q$ yields that 
$1$ is not an eigen-value of $Q_n$, either, for any $n\in\bN$. 
Thus, the assertion follows from \eqref{quasifree density}.
\end{proof}
\medskip

Finally, a symbol $Q$ on $\bC^{\n}$ is translation-invariant (or rotation-invariant), if it 
commutes with the $n$-dimensional translation unitary 
$U^{\transl}_n=\sum_{k=0}^{n-1}\diad{\egy_{\{k+1\}}}{\egy_{\{k\}}}$, where the addition is modulo $n$. Such operators are also called \ki{circular}, and are simultaneously diagonalized 
by the \ki{$n$-dimensional discrete Fourier transformation}
\begin{align}\label{DFT}
\Ft_n:\,\bC^{\n}\to\bC^{\n}, \ds\ds\ds
\Ft_n\egy_{\{k\}}:=\frac{1}{\sqrt{n}}\sum_{j=0}^{n-1}e^{i\frac{2\pi}{n}kj}\egy_{\{j\}},\ds\ds\ds
k\in\n.
\end{align}
That is, $U^{\transl}_nQ=QU^{\transl}_n$, $0\le Q\le I$, if and only if $Q=\Ft_n^*M_{\fv q}\Ft_n$
for some $\fv q\in[0,1]^{\n}$.

\subsection{Asymptotic binary state discrimination}
\label{sec:state disc}

The infinite spin chain algebra with single-site finite-dimensional Hilbert space $\hil$ 
is defined as
\begin{align*}
\C_{\bZ}(\hil):=\otimes_{k\in\bZ}\B(\hil):=
\Bigg(\bigcup_{\Lambda\subseteq\bZ\,\mathrm{finite}}\C_{\Lambda}(\hil)\Bigg)\Bigg/_{
\large{\sim}}
\ds\ds,
\end{align*}
where $\C_{\Lambda}(\hil):=\otimes_{i\in\Lambda}\B(\hil)$, and 
every $A\in\C_{\Lambda}(\hil)$ is naturally identified with 
$A\otimes\bz\otimes_{i\in\Lambda'\setminus\Lambda}I\jz
\in\C_{\Lambda'}(\hil)$ for $\Lambda'\supseteq\Lambda$.
A translation-invariant state $\omega$ on the infinite spin chain is specified by 
density operators $\omega_{\Lambda}$ in $\C_{\Lambda}(\hil)$ such that 
$\Tr_{\Lambda'\setminus\Lambda}\omega_{\Lambda'}=\omega_{\Lambda}$ and
$\omega_{\Lambda+k}=\omega_{\Lambda}$ for any finite $\Lambda\subseteq\bZ$ and $k\in\bZ$.
Equivalently, $\omega$ is a positive linear functional on the $C^*$-algebra $\C_{\bZ}(\hil)$,
with $\omega(I)=1$, such that for the translation automorphism $\tau$ we have $\omega\circ\tau=\omega$,
and the $\omega_{\Lambda}$ are the density operators of its restrictions onto
$\C_{\Lambda}(\hil)$.

Given two sets translation-invariant states
$\Omega^{(0)}=\{\omega^{(0,i)}\}_{i\in\I}$ and $\Omega^{(1)}=\{\omega^{(1,j)}\}_{j\in\J}$, 
a state discrimination protocol of sample size $n$
to decide if the true state of the system belongs to $\Omega^{(0)}$
(null-hypothesis $H_0$) or to $\Omega^{(1)}$
(alternative hypothesis $H_1$), 
is specified by a \ki{test}
$T_n\in\C_{[1,n]}(\hil)$ with $0\le T_n\le I$, representing a measurement with outcomes $0$ and 
$1$, with corresponding measurement 
operators $T_n$ and $I-T_n$, respectively. If the outcome of the measurement is $k$, the 
experimenter accepts hypothesis $H_k$ to be true. 
The (worst-case) \ki{type I error probability} of incorrectly rejecting $H_0$, and the 
\ki{type II error probability} of incorrectly accepting $H_0$, respectively, are given by 
\begin{align*}
\alpha_n(T_n):=\sup_{i\in\I}\Tr\omega^{(0,i)}_{[1,n]} (I-T_n),\ds\ds\ds
\beta_n(T_n):=\sup_{j\in\J}\Tr\omega^{(1,j)}_{[1,n]} T_n.
\end{align*}
A test $T_n$ is \ki{projective}, if $T_n^2=T_n$. 
Given a sequence of tests $\vec{T}=(T_n)_{n\in\bN}$, with $T_n\in\C_{[1,n]}(\hil)$, $n\in\bN$, the corresponding 
\ki{type I and type II error exponents} are defined, respectively, as
\begin{align}\label{errexp}
\alpha^{\exp}(\vec{T}):=\liminf_{n\to+\infty}-\frac{1}{n}\log\alpha_n(T_n),\ds\ds\ds
\beta^{\exp}(\vec{T}):=\liminf_{n\to+\infty}-\frac{1}{n}\log\beta_n(T_n).
\end{align}
We say that $\Omega^{(0)}$ and $\Omega^{(1)}$ can be 
\ki{super-exponentially distinguished}, if there exists a test sequence 
$\vec{T}$ along which $\alpha^{\exp}(\vec{T})=+\infty=\beta^{\exp}(\vec{T})$.

As it was shown in \cite{ArakiXY} (see also \cite[Section 5.3]{MosonyiPhD} for a detailed exposition) every 
translation-invariant gauge-invariant quasi-free state $\omega$ on 
$\car{\ell^2(\bZ)}$ can be mapped into 
a translation-invariant state $\tilde\omega$ on the spin chain 
$\C_{\bZ}(\bC^2)$ with the preservation of the locality structure. In particular, 
given two sets $\Omega^{(0)}=\{\omega^{(0,i)}\}_{i\in\I}$ and 
$\Omega^{(1)}=\{\omega^{(1,j)}\}_{j\in\J}$ of such states on $\car{\ell^2(\bZ)}$, and numbers
$\alpha,\beta\in[0,1]$, there exists a (projective) test 
$T_n\in\C_{[1,n]}(\bC^2)$ such that 
$\sup_{i\in\I}\Tr\tilde\omega^{(0,i)}_{[1,n]}(I-T_n)=\alpha$, 
$\sup_{j\in\J}\Tr\tilde\omega^{(1,j)}_{[1,n]}T_n=\beta$, 
if and only if there exists a (projective) test
$S_n\in\car{\hil_n}$ such that 
$\sup_{i\in\I}\omega^{(0,i)}(I-S_n)=\alpha$, 
$\sup_{j\in\J}\omega^{(1,j)}(S_n)=\beta$. 
Hence, in order to explore the achievable error exponent pairs
for the pair $\tilde\Omega^{(0)}=\{\tilde\omega^{(0,i)}\}_{i\in\I}$,
$\tilde\Omega^{(1)}=\{\omega^{(1,j)}\}_{j\in\J}$,
one can work directly on the CAR algebra with 
$\Omega^{(0)}$ and $\Omega^{(1)}$. Thus, we introduce the following:

\begin{defin}\label{def:superexp}
Let $\{\fv q_i\}_{i\in\I}$ and $\{\fv r_j\}_{j\in\J}$ be measurable functions from 
$[0,2\pi)$ to $[0,1]$, defining the translation-invariant quasi-free states 
$\Omega_Q:=\{\omega_{Q^{(i)}}\}_{i\in\I}$, $\Omega_R:=\{\omega_{R^{(j)}}\}_{j\in\J}$ on $\car{\ell^2(\bZ)}$. 
We say that 
$\Omega_Q$ and $\Omega_R$ can be super-exponentially distinguished, if 
there exists a sequence $T_n\in\car{\hil_n}$, $n\in\bN$, such that 
\begin{align*}
\liminf_{n\to+\infty}-\frac{1}{n}\log\sup_{i\in\I}\omega_{Q^{(i)}}(I-T_n)
=+\infty=
\liminf_{n\to+\infty}-\frac{1}{n}\log\sup_{j\in\J}\omega_{R^{(j)}}(T_n).
\end{align*}
\end{defin}

By the above, the sets of states $\Omega_Q$ and $\Omega_R$ on the CAR algebra
are super-exponentially distinguishable
if and only if so are the sets of states 
$\tilde\Omega_Q$ and $\tilde\Omega_R$ on the spin chain.

\section{Super-exponential distinguishability}
\label{sec:superexp}

In this section we prove the main result of the paper:

\begin{thm}\label{thm:superexp}
Let $\{\fv q_i\}_{i\in\I}$ and $\{\fv r_j\}_{j\in\J}$ be measurable functions from 
$[0,2\pi)$ to $[0,1]$, defining the translation-invariant quasi-free states 
$\Omega_Q:=\{\omega_{Q^{(i)}}\}_{i\in\I}$, $\Omega_R:=\{\omega_{R^{(j)}}\}_{j\in\J}$ on $\car{\ell^2(\bZ)}$. 
If there exists an interval $[\lep,\uep]\subseteq[0,2\pi)$ of positive length such that 
$\fv q_i$ is constant $0$ and $\fv r_j$ is constant $1$ on 
it for every $i\in\I$ and $j\in\J$, then
$\Omega_Q$ and $\Omega_R$ are super-exponentially 
distinguishable. 

If, moreover, $\fv q_i$ is not almost everywhere $0$ and $\fv r_j$ is not 
almost everywhere $1$ on $[0,2\pi)$ for every $i\in\I$ and $j\in\J$, 
then their local densities $\what\omega_{Q_n^{(i)}}$ and 
$\what\omega_{R_n^{(j)}}$, $n\in\bN$, are all invertible.
\end{thm}

In fact, the above theorem follows immediately from a more detailed statement given in 
Theorem \ref{thm:superexp2} below, which we prove in several steps. 

The main intuition behind the proof is the following.
(For simplicity we take $|\I|=|\J|=1$, $Q^{(1)}=:Q$, $R^{(1)}=:R$.)
Although the symbols $Q,R\in\B(\ell^2(\bZ))$ commute with each other, this is not true anymore 
for their restrictions $Q_n$ and $R_n$ onto $\bC^{\n}$, unless $Q$ or $R$ is a constant multiple of the 
identity. On the other hand, if instead of restrictions of translation-invariant symbols onto
finite-dimensional subspaces we considered translation-invariant symbols 
$Q_n,R_n$ on 
the single-particle Hilbert space $\bC^{\n}$ of a length $n$ finite chain
(with periodic boundary conditions, or equivalently, rotation-invariant symbols on a finite ring) then any two such symbols would commute with each other, and would be simultaneously
diagonalized by the discrete Fourier transformation; see the end of Section \ref{sec:CAR}.
Now, the analogous condition to the one in Theorem \ref{thm:superexp} in the 
finite-dimensional case would be that the functions
$\fv q_n,\fv r_n\in\bC^{\n}$ satisfy $\fv q_n(k)=0=1-\fv r_n(k)$, $k=l+1,\ldots,l+m$ for some 
$l,m\in\bC^{\n}$ (modulo $n$). Hence, for the projection
$E_n:=\F_n^*\sum_{k=l+1}^{l+m}\pr{\egy_{\{k\}}}\F_n$, we would have
$\Tr E_nQ_n=0=\Tr E_n(I-R_n)$. 
A key technical ingredient of our proof, given in Lemma \ref{lemma:main} and 
Corollary \ref{cor:error bounds} below,
is that for any such projection, one can construct a test, using the spectral decomposition 
of the particle number operator on the subspace $\ran E_n$, such that 
the type I and type II error probabilities are upper bounded by a simple expression 
involving only $\Tr E_n$, $\Tr E_nQ_n$ and $\Tr E_n(I-R_n)$; in particular, if the latter two are $0$ then so are the error probabilities.
When $Q_n$ and $R_n$ are the non-commuting restrictions of $Q,R\in\B(\ell^2(\bZ))$, we can 
still follow the above strategy, where instead of making the upper bounds exactly 
zero, we can make them sufficiently small, as shown in 
Lemmas \ref{lemma:DFT diagonal}, \ref{lemma:constant fv} and \ref{lemma:final bounds}.

\begin{lemma}\label{lemma:main}
Let $\hil$ be a finite-dimensional Hilbert space, and 
\begin{align}\label{numop test}
S_{\hil}:=\sum_{k=0}^{\floor{\dim\hil/2}}P^{N_{\hil}}_k.
\end{align}
For any $A\in\B(\hil)$ with $0\le A\le I$, 
\begin{align*}
\omega_A(I-S)\le\bz\frac{8\Tr A}{\dim\hil}\jz^{\frac{\dim\hil}{2}},\ds\ds\ds\ds\ds
\omega_A(S)\le\bz\frac{8\Tr (I-A)}{\dim\hil}\jz^{\frac{\dim\hil}{2}}.
\end{align*}
\end{lemma}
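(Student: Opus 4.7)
The plan is to convert $\omega_Q(I-S)$ into the tail probability of a sum of independent Bernoulli random variables and bound it via a Chernoff--Markov estimate sharpened by AM--GM; the bound on $\omega_R(S)$ will follow from a particle--hole exchange.

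First, diagonalize $Q=\sum_{j=1}^{d}q_j\pr{e_j}$ with $d:=\dim\hil$ and $q_j\in[0,1]$. By \eqref{quasifree density}, $\what\omega_Q$ is diagonal in the ONB $\{e_{i_1}\wedge\cdots\wedge e_{i_k}\}$ associated with this eigenbasis of $Q$, with the vector indexed by $A=\{i_1,\ldots,i_k\}\subseteq[d]$ carrying eigenvalue $\prod_{j\in A}q_j\prod_{j\notin A}(1-q_j)$; by \eqref{numop spectral}, each $P^{N_\hil}_k$ is the projection onto the span of those basis vectors with $|A|=k$. Hence $\what\omega_Q$ and $N_\hil$ are simultaneously diagonal in this common basis, and
\begin{align*}
\omega_Q(I-S)=\sum_{\substack{A\subseteq[d]\\ |A|>d/2}}\prod_{j\in A}q_j\prod_{j\notin A}(1-q_j)=\Prob\bz N_d>d/2\jz,
\end{align*}
where $N_d:=\sum_{j=1}^{d}X_j$ and $X_j$ are independent Bernoullis with $\Prob(X_j=1)=q_j$.

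Second, for any $\lambda>1$, Markov's inequality applied to $\lambda^{N_d}$ (combined with $\lfloor d/2\rfloor+1\ge d/2$, so that $\lambda^{-\lfloor d/2\rfloor-1}\le\lambda^{-d/2}$) and AM--GM on the resulting product yield
\begin{align*}
\Prob\bz N_d>d/2\jz\le \lambda^{-d/2}\prod_{j=1}^{d}\bz 1+q_j(\lambda-1)\jz\le \lambda^{-d/2}\bz 1+(\lambda-1)T/d\jz^{d},
\end{align*}
where $T:=\Tr Q$. Assuming $T<d/2$, the minimum over $\lambda>1$ occurs at $\lambda=d/T-1$ and equals $(4T(1-T/d)/d)^{d/2}\le(4T/d)^{d/2}\le(8T/d)^{d/2}$; if instead $T\ge d/2$ the target bound $(8T/d)^{d/2}\ge 1$ is trivial.

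Third, for $\omega_R(S)$, the complementary reindexing $B:=[d]\setminus A$ gives
\begin{align*}
\omega_R(S)=\sum_{|B|\ge\lceil d/2\rceil}\prod_{j\in B}(1-r_j)\prod_{j\notin B}r_j=\Prob\bz\tilde N_d\ge\lceil d/2\rceil\jz,
\end{align*}
where $\tilde N_d$ is a sum of independent Bernoullis with parameter $1-r_j$ and mean $\Tr(I-R)$. Repeating the Chernoff--AM--GM argument (again $\lceil d/2\rceil\ge d/2$) with $T$ replaced by $\Tr(I-R)$ delivers the matching bound. No step is a genuine obstacle; the only non-mechanical point is identifying the common eigenbasis of $\what\omega_Q$ and $N_\hil$ arising from the eigenbasis of $Q$, which reduces the quantum trace to a classical tail probability on which standard concentration finishes the job.
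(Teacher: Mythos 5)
Your proof is correct. It shares the paper's key reduction---diagonalizing $Q$ (resp.\ $R$) and observing that $\what\omega_Q$ and the particle number operator are simultaneously diagonal in the wedge basis built from the eigenbasis of $Q$, so that $\omega_Q(I-S)$ is exactly a classical Bernoulli (Poisson--binomial) tail probability---but your tail estimate is genuinely different from the paper's. The paper bounds the density by dropping the factors $1-q_j$, applies AM--GM to each remaining product $\prod_{j\in\Lambda}q_j$ to get $(\Tr Q/(d/2))^{d/2}$ per term, and then pays a factor $2^d$ for counting the subsets $\Lambda$, which is precisely where the constant $8$ originates. You instead keep the exact probability, apply Markov's inequality to $\lambda^{N_d}$, use AM--GM on the product of Bernoulli moment-generating functions, and optimize over $\lambda$, arriving at the sharper intermediate bound $\bz 4T(1-T/d)/d\jz^{d/2}$ before relaxing to $\bz 8T/d\jz^{d/2}$; the type II bound then follows by the particle--hole relabelling exactly as in the paper's symmetric treatment of $R$. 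Your route buys a better constant and makes the probabilistic (Chernoff) content explicit, at the price of an optimization and the case split at $T\ge d/2$; the paper's route is more pedestrian but purely term-by-term. One cosmetic point: the formula $\lambda=d/T-1$ presupposes $0<T<d/2$; when $T=0$ the bound (which is then $0$) follows directly from the exact identity, or by letting $\lambda\to\infty$, so no content is lost, but it is worth a half-sentence.
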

\begin{proof}
Let $d:=\dim\hil$, $S:=S_{\hil}$, and 
\begin{align*}
A=\sum_{i=1}^{d}a_i\pr{e_i},
\end{align*}
be an eigen-decomposition of $A$.
By \eqref{numop spectral},
\begin{align}
S&=
\sum_{k=0}^{\floor{d/2}}U_e^*\bz\sum_{\Lambda\subseteq[d],\,|\Lambda|=k}
\bz\medotimes_{j\in\Lambda}\begin{bmatrix}0 & 0\\ 0 & 1\end{bmatrix}\jz
\medotimes
\bz\medotimes_{j\in[d]\setminus\Lambda}\begin{bmatrix}1 & 0\\ 0 & 0\end{bmatrix}\jz\jz 
U_e.\label{numop JW1}
\end{align}
By \eqref{quasifree density},
\begin{align}
\what\omega_A
&=
U_e^*\bz\medotimes_{j=1}^d\begin{bmatrix}1-a_j & 0 \\ 0 & a_j\end{bmatrix}\jz U_e
\le
\begin{cases}
U_e^*\bz\medotimes_{j=1}^d\begin{bmatrix}1 & 0 \\ 0 & a_j\end{bmatrix}\jz U_e,\\
\s\\
U_e^*\bz\medotimes_{j=1}^d\begin{bmatrix}1-a_j & 0 \\ 0 & 1\end{bmatrix}\jz U_e.
\end{cases}
\label{density upper1}
\end{align}
Using \eqref{numop JW1} and the first bound in \eqref{density upper1}, we get
\begin{align*}
\omega_A(I-S)&=\Tr\what\omega_A(I-S)\\
&\le
\Tr\sum_{k=\floor{d/2}+1}^{d}
\sum_{\Lambda\subseteq[d],\,|\Lambda|=k}
\bz\medotimes_{j\in\Lambda}\begin{bmatrix}0 & 0\\ 0 & a_j\end{bmatrix}\jz
\medotimes
\bz\medotimes_{j\in[d]\setminus\Lambda}\begin{bmatrix}1 & 0\\ 0 & 0\end{bmatrix}\jz\\
&=
\sum_{k=\floor{d/2}+1}^d
\sum_{\Lambda\subseteq[d],\,|\Lambda|=k}\prod_{j\in\Lambda}a_j
\le
\sum_{\floor{d/2}+1}^d
\sum_{\Lambda\subseteq[d],\,|\Lambda|=k}\bz\frac{\sum_{j\in\Lambda}a_j}{k}\jz^{k},
\end{align*}
where the second inequality follows from the geometric-arithmetic mean inequality.
Using that 
\begin{align*}
\bz\frac{\sum_{j\in\Lambda}a_j}{k}\jz^{k}
\le
\bz\frac{\sum_{j\in\Lambda}a_j}{k}\jz^{d/2}
\le
\bz\frac{\sum_{j\in\Lambda}a_j}{d/2}\jz^{d/2}
\le
\bz\frac{\Tr A}{d/2}\jz^{d/2}
\end{align*}
for every $k\ge \floor{d/2}+1$,
we get 
\begin{align*}
\omega_A(I-S)\le
\bz\frac{\Tr A}{d/2}\jz^{d/2}\sum_{k=\floor{d/2}+1}^d
\underbrace{\sum_{\Lambda\subseteq[d],\,|\Lambda|=k}1}_{=\binom{d}{k}}
\le
2^{d}\bz\frac{\Tr A}{d/2}\jz^{d/2}=\bz\frac{8\Tr A}{d}\jz^{d/2}.
\end{align*}
Similarly, using 
\eqref{numop JW1} and the second bound in \eqref{density upper1} yields
\begin{align*}
\omega_A(S)&=\Tr\what\omega_AS\\
&\le
\Tr\sum_{k=0}^{\floor{d/2}}
\sum_{\Lambda\subseteq[d],\,|\Lambda|=k}
\bz\medotimes_{j\in\Lambda}\begin{bmatrix}0 & 0\\ 0 & 1\end{bmatrix}\jz
\medotimes
\bz\medotimes_{j\in[d]\setminus\Lambda}\begin{bmatrix}1-a_j & 0\\ 0 & 0\end{bmatrix}\jz\\
&=
\sum_{k=0}^{\floor{d/2}}
\sum_{\Lambda\subseteq[d],\,|\Lambda|=k}\prod_{j\in[d]\setminus\Lambda}(1-a_j)
\le
\sum_{k=0}^{\floor{d/2}}
\sum_{\Lambda\subseteq[d],\,|\Lambda|=k}\bz\frac{\sum_{j\in[d]\setminus\Lambda}(1-a_j)}{d-k}\jz^{d-k}\,.
\end{align*}
Using that 
\begin{align*}
\bz\frac{\sum_{j\in[d]\setminus\Lambda}(1-a_j)}{d-k}\jz^{d-k}
\le
\bz\frac{\sum_{j\in[d]\setminus\Lambda}(1-a_j)}{d-k}\jz^{d-d/2}
\le
\bz\frac{\sum_{j\in[d]\setminus\Lambda}(1-a_j)}{d-d/2}\jz^{d/2}
\le
\bz\frac{\Tr(I-A)}{d-d/2}\jz^{d/2}
\end{align*}
for all $k\le\floor{d/2}$,
we get 
\begin{align*}
\omega_A(S)\le
\bz\frac{\Tr(I-A)}{d/2}\jz^{d/2}
\sum_{k=0}^{\floor{d/2}}
\underbrace{\sum_{\Lambda\subseteq[d],\,|\Lambda|=k}1}_{=\binom{d}{k}}
\le
2^{d}\bz\frac{\Tr(I-A)}{d/2}\jz^{d/2}=\bz\frac{8\Tr(I-A)}{d}\jz^{d/2}.
\end{align*}
\end{proof}

\begin{cor}\label{cor:error bounds}
Let $\hil$ be a Hilbert space.
For every non-zero finite-rank projection $E$ on $\hil$ there exists 
an even projection $T\in\spann\{a(\phi):\,\phi\in\ran E\}$ such that 
for every $A\in\B(\hil)$, $0\le A\le I$,
\begin{align}\label{error bounds}
\omega_A(I-T)\le\bz\frac{8\Tr EA}{\Tr E}\jz^{\frac{\Tr E}{2}},\ds\ds\ds
\omega_A(T)\le\bz\frac{8\Tr E(I-A)}{\Tr E}\jz^{\frac{\Tr E}{2}}.
\end{align}
\end{cor}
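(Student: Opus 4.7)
The strategy is to reduce the statement to Lemma \ref{lemma:main} by passing to the finite-dimensional subspace $\K:=\ran E$ of $\hil$. Write $V:\K\to\hil$ for the isometric inclusion, so that $V^*V=I_\K$ and $VV^*=E$. The inclusion of single-particle spaces induces a natural unital embedding of $C^*$-algebras $\car{\K}\hookrightarrow\car{\hil}$ sending $a(\phi)\in\car{\K}$ to $a(\phi)\in\car{\hil}$ for each $\phi\in\K$, and this embedding maps $\car{\K}_+$ into $\car{\hil}_+$.

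The key observation is that the restriction of a gauge-invariant quasi-free state to such a subalgebra is again quasi-free, with the compressed symbol: concretely, $\omega_Q|_{\car{\K}}=\omega_{V^*QV}$ and $\omega_R|_{\car{\K}}=\omega_{V^*RV}$. This follows directly from the defining formula \eqref{eqn:quasifree_functional} together with the identity $\inner{\psi}{Q\phi}=\inner{\psi}{V^*QV\phi}$ for $\phi,\psi\in\K$, which is immediate from $VV^*=E$ and $E\psi=\psi$, $E\phi=\phi$. Applying Lemma \ref{lemma:main} on the finite-dimensional Hilbert space $\K$ with the compressed symbols $V^*QV$ and $V^*RV$ then produces an even projection $S\in\car{\K}$, namely the particle-number cutoff $\sum_{k=0}^{\lfloor\dim\K/2\rfloor}P^{N_\K}_k$, satisfying
\begin{align*}
\omega_{V^*QV}(I-S)\le\left(\frac{8\Tr V^*QV}{\dim\K}\right)^{\dim\K/2},\qquad
\omega_{V^*RV}(S)\le\left(\frac{8\Tr(I-V^*RV)}{\dim\K}\right)^{\dim\K/2}.
\end{align*}

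Taking $T\in\car{\hil}$ to be the image of $S$ under the embedding $\car{\K}\hookrightarrow\car{\hil}$, one has $\omega_Q(I-T)=\omega_{V^*QV}(I-S)$ and $\omega_R(T)=\omega_{V^*RV}(S)$, and the target bounds \eqref{error bounds} follow from the elementary trace identities $\dim\K=\rk E=\Tr E$, $\Tr V^*QV=\Tr QVV^*=\Tr EQ$, and $\Tr(I_\K-V^*RV)=\Tr E-\Tr ER=\Tr E(I-R)$, each obtained via $VV^*=E$ and the cyclicity of the trace. There is no real obstacle here: the one substantive point to be checked carefully is that $\omega_Q$ restricts to a quasi-free state with compressed symbol on the embedded copy of $\car{\K}$, and that the embedding preserves the even/identity structure so that $T$ qualifies as an even projection lying in the $C^*$-subalgebra generated by $\{a(\phi):\phi\in\ran E\}$.
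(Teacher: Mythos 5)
Your proposal is correct and follows essentially the same route as the paper's proof: identify $\car{\ran E}$ with the subalgebra of $\car{\hil}$ generated by $\{a(\phi):\phi\in\ran E\}$, note that the restrictions of $\omega_Q,\omega_R$ are quasi-free with compressed symbols $V^*QV,V^*RV$, apply Lemma \ref{lemma:main} there, and finish with the trace identities $\Tr V^*QV=\Tr EQ$, $\Tr(I_{\ran E}-V^*RV)=\Tr E(I-R)$, $\dim\ran E=\Tr E$. The only cosmetic difference is that you realize the embedding directly as $a(\phi)\mapsto a(\phi)$, whereas the paper writes it as conjugation by $\fockop{(V_E)}$; the substance is identical.
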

\begin{proof}
Let $V_E$ be the identical embedding of $\ran E$ into $\hil$. 
For any even projection $S\in\car{\ran E}$,
$T:=\fockop{(V_E)}S\fockop{(V_E)}^*$ is an even projection in $\spann\{a(\phi):\,\phi\in\ran E\}$, and 
$\omega_A(I-T)=\omega_{A_E}(I-S)$, $\omega_A(T)=\omega_{A_E}(S)$, where
$A_E:=V_E^*AV_E$,
Noting that $\Tr A_E=\Tr AE$, $\Tr (I_{\ran E}-A_E)=\Tr E(I-A)$,
the assertion follows from 
Lemma \ref{lemma:main} by choosing $S:=S_{\ran E}$ as in \eqref{numop test}.
\end{proof}

\begin{cor} \label{cor:error bounds2}
Let $\{Q^{(i)}\}_{i\in\I},\{R^{(j)}\}_{j\in\J}\subseteq\bounded{\ell^2(\integers)}$ be symbols of quasi-free states
on $\car{\ell^2(\integers)}$. Assume that there exists a sequence of non-zero projections 
$E_n$ on $\bC^{\n}$, $n\in\bN$, such that 
\begin{align*}
\liminf_{n\to+\infty}\frac{\Tr E_n}{n}\log\frac{\Tr E_n}{\sup_{i\in\I}\Tr E_nQ^{(i)}_n}=+\infty,\ds\ds\ds
\liminf_{n\to+\infty}\frac{\Tr E_n}{n}\log\frac{\Tr E_n}{\sup_{j\in\J}\Tr E_n(I-R^{(j)}_n)}=+\infty.
\end{align*}
Then $\Omega_Q$ and $\Omega_R$ can be super-exponentially distinguished by 
even projective tests.
\end{cor}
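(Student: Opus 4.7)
The plan is to apply Corollary \ref{cor:error bounds} separately at each sample size $n$, taking $\hil=\bC^{\n}$, the symbols $Q_n,R_n\in\B(\bC^{\n})$, and the given projection $E_n$. This produces an even projection $T_n\in\spann\{a(\phi):\,\phi\in\ran E_n\}\subseteq\A_n$ satisfying
\begin{align*}
\omega_{Q_n}(I-T_n)\le\bz\frac{8\Tr E_nQ_n}{\Tr E_n}\jz^{\Tr E_n/2},\qquad
\omega_{R_n}(T_n)\le\bz\frac{8\Tr E_n(I-R_n)}{\Tr E_n}\jz^{\Tr E_n/2}.
\end{align*}
Since $T_n\in\A_n$ and the restriction of $\omega_Q$ to $\A_n$ coincides with $\omega_{Q_n}$ under the natural identification $\A_n\cong\car{\bC^{\n}}$ (and likewise for $R$), these bounds transfer verbatim to $\omega_Q(I-T_n)$ and $\omega_R(T_n)$.

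Next I would take $-\frac{1}{n}\log$ of each bound, which after rearrangement gives
\begin{align*}
-\frac{1}{n}\log\omega_Q(I-T_n)\ge\frac{\Tr E_n}{2n}\log\frac{\Tr E_n}{\Tr E_nQ_n}-\frac{\Tr E_n}{2n}\log 8,
\end{align*}
and the analogous inequality for $R$ in place of $Q$. The key observation is that $E_n$ is a projection on the $n$-dimensional space $\bC^{\n}$, so $\Tr E_n\le n$, which keeps the subtracted correction uniformly bounded above by $(\log 8)/2$. The first term on the right hand side has liminf equal to $+\infty$ by the hypothesis of the corollary, so the liminf of the left hand side is also $+\infty$; the same argument for $R$ handles the type II error.

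Finally, the projections $T_n$ produced by Corollary \ref{cor:error bounds} lie in the even part of the CAR algebra by construction, as they are built from the spectral projections of the particle-number operator, each of which is fixed by the parity automorphism. Hence the sequence $\vec T=(T_n)_{n\in\bN}$ super-exponentially distinguishes $\omega_Q$ from $\omega_R$ by even tests, as required. There is no substantial obstacle here: the statement is a direct repackaging of Corollary \ref{cor:error bounds} together with the assumed growth rates, with the only piece of bookkeeping being the trivial bound $\Tr E_n\le n$ that renders the $\log 8$ correction subleading on the exponent scale.
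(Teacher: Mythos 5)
Your proof is correct and follows essentially the same route as the paper: both apply Corollary \ref{cor:error bounds}, take $-\frac{1}{n}\log$ of the resulting bounds, and use $\Tr E_n\le n$ to dismiss the $\log 8$ correction. The only cosmetic difference is that the paper invokes Corollary \ref{cor:error bounds} on $\ell^2(\bZ)$ with the embedded projection $V_nE_nV_n^*$, whereas you invoke it on $\bC^{\n}$ with $E_n$ and then use the (stated in the preliminaries) fact that $\omega_Q$ restricted to $\A_n$ is $\omega_{Q_n}$ --- the two formulations coincide.
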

\begin{proof}
Let $T_n$ be the projection corresponding to $V_nE_nV_n^*$ as in Corollary \ref{cor:error bounds}, where $V_n$ is the canonical embedding of 
$\bC^{\n}$ into $\ell^2(\bZ)$. Since $\Tr E_n=\Tr V_nE_nV_n^*$,
$\Tr E_nQ_n^{(i)}=\Tr V_nE_nV_n^*Q^{(i)}$,
$\Tr E_n(I-R_n^{(j)})=\Tr V_nE_nV_n^*(I-R^{(j)})$,
\eqref{error bounds} yields 
\begin{align}
-\frac{1}{n}\log\omega_{Q^{(i)}}(I-T_n)
&\ge
\frac{\Tr E_n}{2n}\log\frac{\Tr E_n}{8\Tr E_nQ_n^{(i)}},\label{nonas lower bound1}\\
-\frac{1}{n}\log\omega_{R^{(j)}}(T_n)
&\ge
\frac{\Tr E_n}{2n}\log\frac{\Tr E_n}{8\Tr E_n(I-R_n^{(j)})}\,.
\label{nonas lower bound2}
\end{align}
The statement follows by taking the infima over the respective index sets, and then the
liminf in $n$ in the above inequalities,
and noting that $0\le \frac{\Tr E_n}{2n}\log 8\le \half\log 8$.
\end{proof}
\medskip

Hence, in order to complete the proof of Theorem \ref{thm:superexp}, it is sufficient to show
that if $\Omega_Q$ and $\Omega_R$ are as in Theorem \ref{thm:superexp} then a sequence of projections as in 
Corollary \ref{cor:error bounds2} exists. 
For this, we will need some simple facts about Fourier transforms; see, e.g., \cite{stein2003fourier}
for details. 

In particular, 
recall that the $n$-th partial sum of the Fourier series of an integrable function on 
$[0,2\pi)$ is given by 
\begin{align*}
(S_nf)(x):=\sum_{k=-n}^ne^{ikx}\frac{1}{2\pi}\int_{[0,2\pi)}e^{-ikt}f(t)\,\dd t
=
(f\conv D_n)(x),
\end{align*}
where $D_n(x):=\frac{1}{2\pi}\sum_{k=-n}^n e^{ikx}=
\frac{1}{2\pi}\frac{\sin((n+1/2)x)}{\sin (x/2)}$ is the 
\ki{Dirichlet kernel}, and $\conv$ stands for the convolution. 
The $n$-th C\'esaro mean of the partial sums is 
\begin{align*}
(\what S_n f)(x)
&:=
\frac{1}{n}\sum_{k=0}^{n-1}S_n f(x)
=
\sum_{k=-n+1}^{n-1}\frac{n-|k|}{n}e^{ikx}
 \frac{1}{2\pi}\int_{[0,2\pi)}e^{-ikt}f(t)\,\dd t
=(f\conv F_n)(x),
\end{align*}
where $F_n(x):=\frac{1}{n}\sum_{k=0}^{n-1}D_n(x)=\frac{1}{2\pi n}\frac{\sin^2(nx/2)}{\sin^2(x/2)}$ is the \ki{Fej\'er kernel}. 
	
The following may be known; however, as we have not found a reference in the literature, we provide a detailed proof. Recall that $V_n$ is the canonical embedding of 
$\bC^{\n}$ into $\ell^2(\bZ)$. 

\begin{lemma}\label{lemma:DFT diagonal}
Let $\fv a$ be a bounded measurable complex-valued function on $[0,2\pi)$ and $A=\Ft^* M_{\fv a}\Ft$. Then the diagonal matrix entries of $\Ft_n V_n^*AV_n\Ft_n^*$ are given by
\begin{align*}
(\Ft_n V_n^*AV_n\Ft_n^*)_{k,k}=(\what S_n\fv a)\bz\frac{2\pi k}{n}\jz,\ds\ds\ds
k\in\n.
\end{align*}
\end{lemma}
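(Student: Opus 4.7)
The plan is to compute the diagonal entries of $\Ft_n V_n^*AV_n\Ft_n^*$ by direct expansion, using the Toeplitz description \eqref{Toeplitz} for the matrix entries of the translation-invariant operator $A$, and then recognize the resulting expression as the Fej\'er/Ces\`aro sum $(\what S_n\fv a)$ evaluated at the lattice point $2\pi k/n$.

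Concretely, I would first write $\xi_k := \Ft_n^*\egy_{\{k\}} = \frac{1}{\sqrt n}\sum_{j=0}^{n-1}e^{-i\frac{2\pi}{n}kj}\egy_{\{j\}}$ using the definition \eqref{DFT} of $\Ft_n$, so that
\begin{align*}
\bz\Ft_n V_n^*AV_n\Ft_n^*\jz_{k,k}
=\inner{\xi_k}{V_n^*AV_n\xi_k}
=\frac{1}{n}\sum_{i,j=0}^{n-1}e^{i\frac{2\pi}{n}k(i-j)}A_{i,j},
\end{align*}
where $A_{i,j}=\inner{\egy_{\{i\}}}{A\egy_{\{j\}}}$. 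By \eqref{Toeplitz}, $A_{i,j}=\hat{\fv a}(i-j)$, where $\hat{\fv a}(m):=\frac{1}{2\pi}\int_{[0,2\pi)}e^{-imx}\fv a(x)\,\dd x$ is the $m$-th Fourier coefficient of $\fv a$.

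Next, I would substitute $m:=i-j$ and count pairs: for each $m\in\{-(n-1),\ldots,n-1\}$, the number of pairs $(i,j)\in\n\times\n$ with $i-j=m$ is exactly $n-|m|$. This yields
\begin{align*}
\bz\Ft_n V_n^*AV_n\Ft_n^*\jz_{k,k}
=\sum_{m=-(n-1)}^{n-1}\frac{n-|m|}{n}\,e^{i\frac{2\pi}{n}km}\,\hat{\fv a}(m),
\end{align*}
which is exactly the expression for $(\what S_n\fv a)(2\pi k/n)$ given in the excerpt's formula for the Ces\`aro mean of partial Fourier sums. This completes the identification.

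There is no real obstacle here: the argument is a direct computation combined with the standard combinatorial identity for the number of pairs on a finite lattice with prescribed difference (which is precisely the identity that produces the Fej\'er kernel from the Dirichlet kernel). The only bookkeeping to watch is the sign convention in \eqref{DFT} versus \eqref{Toeplitz}, which matches up so that the phase factor $e^{i\frac{2\pi}{n}km}$ appears, reproducing $(\what S_n\fv a)(2\pi k/n)$ rather than its complex conjugate.
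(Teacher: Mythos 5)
Your proposal is correct and follows essentially the same route as the paper's proof: expand the diagonal entry via the Toeplitz formula \eqref{Toeplitz} and the DFT \eqref{DFT}, collapse the double sum over index pairs to a single sum over the difference $m$ with multiplicity $n-|m|$, and identify the result with the Ces\`aro mean $(\what S_n\fv a)\bz\frac{2\pi k}{n}\jz$. The sign-convention check you mention is the only delicate point, and you resolve it the same way the paper implicitly does.
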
	
\begin{proof}
Let $A_n:=V_n^*AV_n$. By \eqref{Toeplitz} and \eqref{DFT}, 
\begin{align*}
(\Ft_n V_n^*AV_n\Ft_n^*)_{k,k}&=
\sum_{j,l=0}^{n-1}(\Ft_n)_{k,j}(V_n^*AV_n)_{j,l}(\Ft_n^*)_{l,k}\\
&=\frac{1}{n}\sum_{j,l=0}^{n-1}e^{i\frac{2\pi}{n}kj}e^{-i\frac{2\pi}{n}kl}
\frac{1}{2\pi}\int_{[0,2\pi)}e^{-i(j-l)x}\fv a(x)\,\dd x\\
&=
\sum_{m=-n+1}^{n-1}\frac{n-|m|}{n}e^{i\frac{2\pi}{n}km}
\frac{1}{2\pi}\int_{[0,2\pi)}e^{-imx}\fv a(x)\,\dd x\\
&=(\what S_n\fv a)\bz\frac{2\pi k}{n}\jz,
\end{align*}
where in the third equality above we replaced the summation over $j,l$ with a single summation over
$m=j-l$.
\end{proof}	
	
\begin{lemma}\label{lemma:constant fv}
Let $\fv a:\,[0,2\pi)\to[0,1]$ be a measurable function.
Assume that $\fv a$ is constant $c$ on some interval $[\lep,\uep]\subseteq[0,2\pi)$. 
Then for every $0<\delta<(\uep-\lep)/2$, and every 
$x\in[\lep+\delta,\uep-\delta]$, 
\begin{align*}
\abs{(\what S_n\fv a)(x)-c}\le\frac{\gamma_{\delta}}{n}\,,
\end{align*}
where $\gamma_{\delta}:=\frac{1}{\sin^2\frac{\delta}{2}}$.
\end{lemma}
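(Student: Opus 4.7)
The plan is to exploit the well-known localization property of the Fej\'er kernel together with the fact that $\fv a$ is identically $c$ on an interval.

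First I would rewrite the difference as a single convolution integral. Since $F_n$ is a probability density on $[-\pi,\pi]$ (equivalently on $[0,2\pi)$), we have $\int F_n = 1$, so
\begin{align*}
(\what S_n \fv a)(x) - c = (\fv a \conv F_n)(x) - c = \int_{-\pi}^{\pi} \bigl(\fv a(x-t) - c\bigr) F_n(t)\,\dd t,
\end{align*}
where $\fv a$ and $F_n$ are regarded as $2\pi$-periodic functions on the circle.

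Next I would split this integral according to whether $|t| < \delta$ or $\delta \le |t| \le \pi$. For $|t| < \delta$ and $x \in [\alpha+\delta, \omega-\delta]$, the point $x - t$ lies in $[\alpha, \omega]$ (here I use that $\alpha \ge 0$ and $\omega < 2\pi$, so no wrap-around occurs). Thus $\fv a(x-t) = c$ and the integrand vanishes on this region, leaving
\begin{align*}
(\what S_n \fv a)(x) - c = \int_{\delta \le |t| \le \pi} \bigl(\fv a(x-t) - c\bigr) F_n(t)\,\dd t.
\end{align*}

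Finally I would estimate the tail by using $|\fv a - c| \le 1$ (since both take values in $[0,1]$) and the standard pointwise bound $F_n(t) \le \frac{1}{2\pi n \sin^2(t/2)}$. On the set $\delta \le |t| \le \pi$, monotonicity of $\sin^2(\cdot/2)$ on $[0,\pi]$ gives $\sin^2(t/2) \ge \sin^2(\delta/2)$, so
\begin{align*}
|(\what S_n \fv a)(x) - c|
\le \int_{\delta \le |t| \le \pi} \frac{1}{2\pi n \sin^2(t/2)}\,\dd t
\le \frac{2\pi - 2\delta}{2\pi n \sin^2(\delta/2)}
\le \frac{1}{n \sin^2(\delta/2)} = \frac{\gamma_{\delta}}{n}.
\end{align*}

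There is no real obstacle here; the only point requiring minor care is the circle geometry in step two, namely verifying that the condition $x \in [\alpha+\delta, \omega-\delta]$ together with $|t| < \delta$ keeps $x - t$ inside $[\alpha, \omega]$ without wrapping modulo $2\pi$, which is automatic since the hypothesis places $[\alpha,\omega]$ strictly inside $[0, 2\pi)$.
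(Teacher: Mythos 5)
Your proof is correct and follows essentially the same route as the paper's: write $(\what S_n\fv a)(x)-c$ as a convolution with the Fej\'er kernel, note the integrand vanishes for $|t|<\delta$, and bound the tail using $F_n(t)\le\frac{1}{2\pi n\sin^2(t/2)}\le\frac{1}{2\pi n\sin^2(\delta/2)}$. The only difference is cosmetic (your tail estimate gives $\frac{\pi-\delta}{\pi n\sin^2(\delta/2)}$ just as the paper's does), so there is nothing to add.
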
	
\begin{proof}
We may extend $\fv a$ periodically to $\bR$. Then 
\begin{align*}
(\what S_n\fv a)(x)-c
&=				
(\fv a\conv F_n)(x)-c 
= 
\int\limits_{-\pi}^{\pi}F_n(y)\left[a(x-y)-c)\right]\,\dd y
\end{align*}
for every $x$. 
If $x\in[\lep+\delta,\uep-\delta]$ then 
\begin{align*}
\abs{(\fv a\conv F_n)(x)-c} 
&\le 
\int\limits_{|y|<\delta}F_n(y)\underbrace{|\fv a(x-y)-c|}_{=0}\,\dd y
+\int\limits_{\delta\leq|y|\leq\pi}F_n(y)\underbrace{|\fv a(x-y)-c|}_{\le 1}\,\dd y \\
&\le 
\frac{1}{2\pi n}\int\limits_{\delta\leq|y|\leq\pi}\frac{\sin^2\frac{ny}{2}}{\sin^2\frac{y}{2}}\,\dd y
\le 
\frac{1}{2\pi n}\int\limits_{\delta\leq|y|\leq\pi}\frac{1}{\sin^2\frac{\delta}{2}}\,\dd y\\ 
&= 
\frac{\pi-\delta}{\pi n\sin^2\frac{\delta}{2}}\le \frac{1}{n\sin^2\frac{\delta}{2}}\,.
\end{align*}
\end{proof} 
	
\begin{lemma}\label{lemma:final bounds}
Let $[\lep,\uep]\subseteq[0,2\pi)$ be an interval, let 
$0<\delta<(\uep-\lep)/2$, and for every $n\in\bN$, let 
\begin{align*}
E_{n,\delta}:=\sum_{k:\,\frac{2\pi k}{n}\in[\lep+\delta,\uep-\delta]}
\pr{\Ft_n^*\egy_{\{k\}}}.
\end{align*}
Then $E_{n,\delta}$ is a projection on $\bC^{\n}$ such that 
\begin{align}\label{rank lower bound}
\Tr E_{n,\delta}\ge\floor{\frac{\uep-\lep-2\delta}{2\pi}n},
\end{align}
and for every measurable function $\fv a:\,[0,2\pi)\to[0,1]$ 
that is constant $0$ on $[\lep,\uep]$, 
\begin{align}\label{Tr EA upper bound}
\Tr E_{n,\delta}A_n\le(\Tr E_{n,\delta})\frac{\gamma_{\delta}}{n},\ds\ds\ds n\in\bN.
\end{align}
\end{lemma}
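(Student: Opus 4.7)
The plan is to verify the three parts in order, with the nontrivial work concentrated in the inequality \eqref{Tr EA upper bound}. For the projection property, I would observe that since $\Ft_n$ is unitary, the vectors $\{\Ft_n^*\egy_{\{k\}}:\,k\in\n\}$ form an orthonormal basis of $\bC^{\n}$; hence $E_{n,\delta}$, being a sum of rank-one projections onto a sub-family of these vectors, is itself a projection whose trace equals the number of summands. For the rank inequality \eqref{rank lower bound}, the indices $k$ in the sum are precisely the integers contained in the real interval $\left[\frac{(\alpha+\delta)n}{2\pi},\frac{(\omega-\delta)n}{2\pi}\right]$, which (thanks to $[\alpha,\omega]\subseteq[0,2\pi)$ and $0<\delta<(\omega-\alpha)/2$) sits inside $[0,n)$ and has length $\frac{(\omega-\alpha-2\delta)n}{2\pi}$; the elementary fact that any closed real interval of length $L\ge 0$ contains at least $\floor{L}$ integers completes the count.

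For the main bound \eqref{Tr EA upper bound}, I would transport the computation to the Fourier basis. By construction $\Ft_n E_{n,\delta}\Ft_n^*=\sum_{k:\,\frac{2\pi k}{n}\in[\alpha+\delta,\omega-\delta]}\pr{\egy_{\{k\}}}$ is diagonal in the standard basis of $\bC^{\n}$, and by the cyclicity of the trace,
\begin{align*}
\Tr E_{n,\delta}A_n = \Tr(\Ft_n E_{n,\delta}\Ft_n^*)(\Ft_n A_n\Ft_n^*) = \sum_{k:\,\frac{2\pi k}{n}\in[\alpha+\delta,\omega-\delta]}(\Ft_n A_n\Ft_n^*)_{k,k}.
\end{align*}
By Lemma \ref{lemma:DFT diagonal}, each diagonal entry equals $(\what S_n\fv a)(2\pi k/n)$, and by Lemma \ref{lemma:constant fv} applied with $c=0$, each such value satisfies $\abs{(\what S_n\fv a)(2\pi k/n)}\le \gamma_\delta/n$. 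Summing over the $\Tr E_{n,\delta}$ indices in the sum then yields \eqref{Tr EA upper bound}.

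I do not anticipate any serious obstacle, as the analytic content has already been established in Lemmas \ref{lemma:DFT diagonal} and \ref{lemma:constant fv}; the remaining steps are linear-algebraic bookkeeping. The one mild subtlety worth mentioning is that Lemma \ref{lemma:constant fv} provides an \emph{absolute-value} bound, whereas the left-hand side of \eqref{Tr EA upper bound} is a signed (and in fact nonnegative real) sum. This is resolved by observing that $\what S_n\fv a=\fv a\conv F_n\ge 0$ pointwise since $\fv a\ge 0$ and the Fej\'er kernel is nonnegative, so each summand is already a nonnegative real and the absolute value is redundant; equivalently, $\Tr E_{n,\delta}A_n\ge 0$ follows directly from the positive semi-definiteness of both factors.
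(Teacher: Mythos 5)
Your proposal is correct and follows essentially the same route as the paper: the projection and trace claims from unitarity of $\Ft_n$, and the bound \eqref{Tr EA upper bound} by summing the diagonal entries $(\what S_n\fv a)(2\pi k/n)$ via Lemma \ref{lemma:DFT diagonal} and then applying Lemma \ref{lemma:constant fv} with $c=0$. Your extra details (the integer-counting argument for \eqref{rank lower bound}, which the paper calls obvious, and the remark that the diagonal entries are real nonnegative so the absolute-value bound suffices) are fine and only elaborate on the paper's argument.
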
	
\begin{proof}
Since $\Ft_n$ is a unitary, $E_{n,\delta}$ is indeed a projection, and the lower bound in 
\eqref{rank lower bound} is obvious. 
For any $k$ such that $\frac{2\pi k}{n}\in[\lep+\delta,\uep-\delta]$,
\begin{align*}
\Tr \pr{\Ft_n^*\egy_{\{k\}}}A_n=
\inner{\egy_{\{k\}}}{\Ft_nV_n^*AV_n\Ft_n^*\egy_{\{k\}}}
=
(\what S_n\fv a)\bz\frac{2\pi k}{n}\jz
\le
\frac{\gamma_{\delta}}{n},
\end{align*}
where the second equality is due to Lemma \ref{lemma:DFT diagonal}, and the inequality 
follows from Lemma \ref{lemma:constant fv}.
This immediately yields \eqref{Tr EA upper bound}.
\end{proof}	
	
\begin{thm}\label{thm:superexp2}
Let $\Omega_Q$ and $\Omega_R$ be as in Theorem \ref{thm:superexp}.
Then there exists 
a positive constant $c$ and 
a sequence of 
even projections $T_n\in\spann\{a(\phi):\,\phi\in\hil_n\}$ such that 
\begin{align}\label{main res detailed}
-\frac{1}{n}\log\sup_{i\in\I}\omega_{Q^{(i)}}(I-T_n)\ge 
c\log n,\ds\ds\ds
-\frac{1}{n}\log\sup_{j\in\J}\omega_{R^{(j)}}(T_n)\ge 
c\log n,\ds\ds\ds
n\in\bN.
\end{align}
In particular, $\Omega_Q$ and $\Omega_R$ can be super-exponentially distinguished by 
even projective tests. 

If, moreover, $\fv q_i$ is not almost everywhere $0$ and $\fv r_j$ is not 
almost everywhere $1$ on $[0,2\pi)$ for every $i\in\I$ and $j\in\J$, 
then their local densities $\what\omega_{Q_n^{(i)}}$ and 
$\what\omega_{R_n^{(j)}}$, $n\in\bN$, are all invertible.
\end{thm}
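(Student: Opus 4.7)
The plan is to assemble Lemma \ref{lemma:final bounds} and Corollary \ref{cor:error bounds} in the direct manner suggested by the intuitive discussion preceding Lemma \ref{lemma:main}.

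First, I would fix some $\delta\in(0,(\omega-\alpha)/2)$, for concreteness $\delta:=(\omega-\alpha)/4$, and consider the projections $E_{n,\delta}$ on $\bC^{\n}$ supplied by Lemma \ref{lemma:final bounds}. Applying that lemma to $\fv a:=\fv q$ yields $\Tr E_{n,\delta}Q_n\le(\Tr E_{n,\delta})\gamma_\delta/n$, and applying it to $\fv a:=1-\fv r$, which is constant $0$ on $[\alpha,\omega]$ precisely because $\fv r$ is constant $1$ there, yields $\Tr E_{n,\delta}(I-R_n)\le(\Tr E_{n,\delta})\gamma_\delta/n$. Simultaneously, the rank lower bound $\Tr E_{n,\delta}\ge\floor{(\omega-\alpha-2\delta)n/(2\pi)}$ guarantees the existence of $n_0\in\bN$ and $c_1>0$ with $\Tr E_{n,\delta}\ge c_1 n$ for every $n\ge n_0$.

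Next, for each $n\ge n_0$ I would lift $E_{n,\delta}$ to $E:=V_n E_{n,\delta}V_n^*$, a projection on $\ell^2(\bZ)$ with range inside $\hil_n$, satisfying $\Tr E=\Tr E_{n,\delta}$, $\Tr EQ=\Tr E_{n,\delta}Q_n$ and $\Tr E(I-R)=\Tr E_{n,\delta}(I-R_n)$. Corollary \ref{cor:error bounds} then produces an even projection $T_n\in\spann\{a(\phi):\phi\in\hil_n\}$ with
\[
\omega_Q(I-T_n)\le\bz\frac{8\gamma_\delta}{n}\jz^{\Tr E_{n,\delta}/2},\ds\ds\ds\omega_R(T_n)\le\bz\frac{8\gamma_\delta}{n}\jz^{\Tr E_{n,\delta}/2}.
\]
Taking $-(1/n)\log$ of both sides and using $\Tr E_{n,\delta}/n\ge c_1$ gives a lower bound of the form $(c_1/2)(\log n-\log(8\gamma_\delta))$, which exceeds $c\log n$ for any fixed $c<c_1/2$ as soon as $n$ is large enough to absorb the additive term.

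The finitely many remaining small indices are handled by shrinking $c$ so that $c\log n$ is harmlessly small, or by using trivial projections (for $n\le 1$ the inequality is vacuous anyway). The second assertion on invertibility of all local densities follows immediately from Corollary \ref{cor:invertible density} applied separately to $Q$ and to $R$. The real technical weight of the theorem sits in the earlier Dirichlet/Fej\'er kernel estimates behind Lemma \ref{lemma:constant fv} rather than in this final assembly; the only genuine obstacle here is bookkeeping to ensure that a single positive constant $c$ works uniformly for every $n\in\bN$.
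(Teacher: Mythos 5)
Your proposal is correct and follows essentially the same route as the paper: choose $E_{n,\delta}$ from Lemma \ref{lemma:final bounds} applied to $\fv q$ and $1-\fv r$, lift via $V_n$ and invoke Corollary \ref{cor:error bounds} (i.e., the bounds \eqref{nonas lower bound1}--\eqref{nonas lower bound2}) together with \eqref{rank lower bound} and \eqref{Tr EA upper bound}, and get invertibility from Corollary \ref{cor:invertible density}. Your explicit handling of the small-$n$ bookkeeping is in fact slightly more careful than the paper's one-line assembly.
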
	
\begin{proof}
Let $T_n$ be as in the proof of Corollary \ref{cor:error bounds2} with $E_n:=E_{n,\delta}$,
$n\in\bN$, for some $\delta$ as in Lemma \ref{lemma:final bounds}.
The inequalities in \eqref{nonas lower bound1}--\eqref{nonas lower bound2}
combined with \eqref{rank lower bound} and \eqref{Tr EA upper bound} yield 
\eqref{main res detailed}

The assertion about the invertibility of the density operators follows immediately from Corollary \ref{cor:invertible density}.
\end{proof}

\begin{ex}
Consider the $XX$ model with local Hamiltonian on $\B(\bC^2)_{[1,n]}$
given by 
\begin{align*}
H_n:=\half\sum_{k=-n}^{n-1}\bz\sigma_{x,k}\sigma_{x,k+1}+\sigma_{y,k}\sigma_{y,k+1}\jz+h\sum_{k=-n}^n\sigma_{z,k},
\end{align*}
where $\sigma_{x,k}$ is the Pauli $x$ operator
$\begin{bmatrix} 0 & 1 \\ 1 & 0\end{bmatrix}$ at site $k$, etc. 
It is well-known that the ground state of this model in the thermodynamic limit is 
$\tilde\omega_{Q^{(h)}}$, where $\omega_{Q^{(h)}}$ is the translation-invariant
quasi-free state corresponding to 
$\fv q_h:=\fv r_h:=\egy_{[\arccos f(h),2\pi-\arccos f(h))}$,
where $f(t):=\max\{-1,\min\{t,1\}\}$.
(Here $\egy_B$ stands for the indicator function of the set $B$.)
See, e.g., \cite[Appendix C]{MosonyiPhD} for a detailed exposition.

Let $h_0<h_1$ be such that $h_1>-1$ or $h_0<1$, and consider 
$\Omega_{Q}:=\{\fv q_{h}:\, h\le h_0 \}$,
$\Omega_{R}:=\{\fv r_{h}:\,h_1\le h\}$. 
That is, the experimenter's task is to test whether the transverse magnetic field
is below $h_0$ or above $h_1$, by making measurements on a finite part of the 
chain. It is straightforward to verify that 
$\fv q_h$ is constant zero on $[\lep:=\arccos f(h_1),\uep:=\arccos f(h_0)]$
for every $h\le h_0$, while
$\fv r_h$ is constant one on $[\lep,\uep]$
for every $h\ge h_1$, and hence,
by Theorem \ref{thm:superexp},
the two hypotheses can be tested with super-exponentially 
decreasing error probabilities.
By Corollary \ref{cor:invertible density}, the local densities $\what\omega_{Q_n^{(h)}}$ 
are invertible for every $h_1\le h<1$, and 
the local densities $\what\omega_{R_n^{(h)}}$ 
are invertible for every $-1<h\le h_0$.

A variant of the above problem is when the 
experimenter's task is to test whether the transverse magnetic field
is between $h_0$ and $h_0'$  or between $h_1$ and $h_1'$, where 
$-1<h_0<h_0'<h_1<h_1' <1$.
In this case
$\Omega_{Q}:=\{\fv q_{h}:\,h_0\le h\le h_0' \}$,
$\Omega_{R}:=\{\fv r_{h}:\,h_1\le h\le h_1'\}$. 
It is straightforward to verify that this problem satisfies the conditions in 
Theorem \ref{thm:superexp} with $\mu=\arccos h_1$, $\nu=\arccos h_0'$, and therefore
the two hypotheses can be tested with super-exponentially 
decreasing error probabilities, and, moreover, all local densities are invertible for every 
size $n$.
\end{ex}

\section{Comments on orthogonality}
\label{sec:ort}

In this section we discuss some relations between three concepts:
a) the orthogonality of a pair of states, b) their super-exponential distinguishability, and 
c) certain distinguishability measures taking infinite value on the given pair.
We start with an overview of the well-known relations between these
for density operators on a finite-dimensional Hilbert space, and then 
discuss a possible extension to pairs of translation-invariant states on an infinite spin chain.

Let $\bT(\hil):=\{T\in\B(\hil):\,0\le T\le I\}$ denote the set of \ki{tests} on $\hil$. 
It is well-known \cite{Helstrom1976,Holevo78} and easy to see that for any two density operators $\rho,\sigma\in\S(\hil)$,
\begin{align}\label{ort0}
\min_{T\in\bT(\hil)}\{\underbrace{\Tr\rho(I-T)}_{=:\alpha(T)}+\underbrace{\Tr\sigma T}_{=:\beta(T)}\}=1-\half\norm{\rho-\sigma}_1,
\end{align} 
where $\norm{X}_1:=\Tr|X|$, $X\in\B(\hil)$, is the trace-norm. In particular, we have 
\begin{align}\label{ort1}
\exists\,T\in\bT(\hil):\s\alpha(T)=0=\beta(T)\ds\iff\ds
\hh(\rho\|\sigma):=-\log\bz 1-\half\norm{\rho-\sigma}_1\jz=+\infty
\ds\iff\ds
\rho\perp\sigma,
\end{align}
where the first condition means perfect distinguishability of $\rho$ and $\sigma$, 
in the second condition we use the convention $\log 0:=-\infty$,
and the orthogonality in the last condition might be formulated in a number of different 
ways, e.g., as the orthogonality of the supports.

Orthogonality may be equivalently captured by various quantum R\'enyi divergences. For 
instance, for any $\alpha\in(0,1)$ and $z\in(0,+\infty)$, let 
\begin{align*}
D_{\alpha,z}(\rho\|\sigma):=\frac{1}{\alpha-1}\log\Tr\bz\rho^{\frac{\alpha}{z}}
\sigma^{\frac{1-\alpha}{z}}\rho^{\frac{\alpha}{z}}\jz^z
\end{align*}
be the R\'enyi $(\alpha,z)$-divergence of $\rho$ and $\sigma$ \cite{AD,JOPP}. Then 
\begin{align}\label{ort alphaz}
\rho\perp\sigma\ds\iff\ds D_{\alpha,z}(\rho\|\sigma)=+\infty\ds\text{for some/all}\ds
(\alpha,z)\ds\text{pairs as above.}
\end{align}
Note that the case $\alpha=1/2$, $z=1$, expresses the orthogonality of the 
unit vectors $\rho^{1/2}$, $\sigma^{1/2}$ in the Hilbert-Schmidt inner product.
Furthermore, for a test $T\in\bT(\hil)$, let 
\begin{align}\label{testmap}
\T(X):=(\Tr XT)\pr{0}+(\Tr X(I-T))\pr{1},
\end{align} 
where $\{\ket{0},\ket{1}\}$ is any orthonormal system in some Hilbert space, and let 
\begin{align*}
D_{\alpha}^{\test}(\rho\|\sigma):=\max_{T\in\bT(\hil)}D_{\alpha}(\T(\rho)\|\T(\sigma))
\end{align*}    
be the \ki{test-measured R\'enyi $\alpha$-divergence} of $\rho$ and $\sigma$ \cite{MH-testdiv}. In the 
above, 
\begin{align*}
D_{\alpha}(\T(\rho)\|\T(\sigma))&=D_{\alpha,z}(\T(\rho)\|\T(\sigma))\\
&=
\frac{1}{\alpha-1}\log\bz(\Tr \rho T)^{\alpha}(\Tr \sigma T)^{1-\alpha}
+
(\Tr \rho (I-T))^{\alpha}(\Tr \sigma (I-T))^{1-\alpha}\jz,\ds\ds\ds
z\in(0,+\infty),
\end{align*} 
is the classical R\'enyi divergence \cite{Renyi} of the commuting pair $\T(\rho),\T(\sigma)$. Then
\begin{align}\label{ort testmeas}
\rho\perp\sigma\ds\iff\ds D_{\alpha}^{\test}(\rho\|\sigma)=+\infty\ds\text{for some/all}\ds
\alpha\in(0,1),
\end{align}
which is just a reformulation of the equivalence of the first and the last conditions in \eqref{ort1}.

Consider now translation-invariant states $\omega^{(0)},\omega^{(1)}$ 
on the infinite spin chain algebra 
$\B(\hil)_{\bZ}$; see Section \ref{sec:state disc}.
One might define many of the above quantities directly for the states
$\omega^{(0)},\omega^{(1)}$. This is obvious for  $\chi(\omega^{(0)}\|\omega^{(1)})$ and $D_{\alpha}^{\test}(\omega^{(0)}\|\omega^{(1)})$; for the Petz-type R\'enyi divergences $D_{\alpha,1}\bz\omega^{(0)}\|\omega^{(1)}\jz$ and the sandwiched 
R\'enyi divergences $D_{\alpha,\alpha}\bz\omega^{(0)}\|\omega^{(1)}\jz$ with $\alpha\in[1/2,1)$, see, e.g., 
\cite{Hiai_fdiv_Springer,HM_sc_opalg,OP}. Most of these quantities, however, 
behave in a singular way for translation-invariant product states. Indeed, let 
$\omega_k$ denote the single-site density operator of $\omega^{(k)}$.
Additivity and monotonicity under restriction to subalgebras then gives
\begin{align*}
D_{\alpha,z}\bz\omega^{(0)}\|\omega^{(1)}\jz
&\ge
D_{\alpha,z}\bz\omega_0^{\otimes n}\|\omega_1^{\otimes n}\jz
=
nD_{\alpha,z}\bz\omega_0\|\omega_1\jz
\xrightarrow[n\to+\infty]{}+\infty
\end{align*}    
for every $\alpha\in(0,1)$ and $z=1$, or $z=\alpha\in[1/2,1)$, whenever $\omega_0\ne\omega_1$.
Using the Fuchs-van de Graaf inequality \cite{FvdG} in the form
$\hh(\rho\|\sigma)\ge \half D_{1/2,1/2}(\rho\|\sigma)$ then yields
\begin{align*}
\hh(\omega^{(0)}\|\omega^{(1)})\ge\hh(\omega_0^{\otimes n}\|\omega_1^{\otimes n})
\ge
\half D_{1/2,1/2}(\omega_0^{\otimes n}\|\omega_1^{\otimes n})\xrightarrow[n\to+\infty]{}+\infty.
\end{align*}
In fact, it is also known that $\hh(\omega^{(0)}\|\omega^{(1)})=+\infty$ whenever 
$\omega^{(0)}$ and $\omega^{(1)}$ are different ergodic states 
\cite[Corollary IV.4.2]{Israel}, and it is not too difficult to see that 
translation-invariant quasi-free states are ergodic 
(for a hint, see, e.g., \cite[Example 7.6]{AlickiFannes-book}).

In view of the above, the above considered distinguishability measures
defined directly for the infinite spin chain states
reveal very little about the relation of 
$\omega^{(0)}$ and $\omega^{(1)}$ from the point of view of state discrimination.
One might consider instead the regularized versions of the above quantities, defined for 
two translation-invariant states $\omega^{(0)}$ and $\omega^{(1)}$ as
\begin{align}\label{regmeas}
\oll\divv(\omega^{(0)}\|\omega^{(1)}):=\liminf_{n\to+\infty}\frac{1}{n}\divv\bz\omega^{(0)}_{[1,n]}\big\|\omega^{(1)}_{[1,n]}\jz,
\end{align}
where $\divv$ may stand for any distinguishability measure on pairs of states, like
$\hh$, $D_{\alpha,z}$, $D_{\alpha}^{\test}$, etc. 

\begin{thm}\label{thm:ort}
Let $\omega^{(0)}$ and $\omega^{(1)}$ be translation-invariant states on the infinite 
spin-chain algebra $\B(\hil)_{\bZ}$. The following are equivalent:
\begin{enumerate}
\item\label{proport1}
$\omega^{(0)}$ and $\omega^{(1)}$ can be super-exponentially distinguished. 
\item\label{proport2}
$\oll\hh\bz\omega^{(0)}\big\|\omega^{(1)}\jz=+\infty$.
\item\label{proport3}
$\oll D_{\alpha}^{\test}\bz\omega^{(0)}\big\|\omega^{(1)}\jz=+\infty$ for every
$\alpha\in(0,1)$.
\item\label{proport4}
$\oll D_{\alpha}^{\test}\bz\omega^{(0)}\big\|\omega^{(1)}\jz=+\infty$ for some 
$\alpha\in(0,1)$.
\item\label{proport5}
$\oll D_{\alpha,z}\bz\omega^{(0)}\big\|\omega^{(1)}\jz=+\infty$ for every
$\alpha\in(0,1)$ and every $z\ge \max\{\alpha,1-\alpha\}$.
\item\label{proport6}
$\oll D_{\alpha,z}\bz\omega^{(0)}\big\|\omega^{(1)}\jz=+\infty$ for some
$\alpha\in(0,1)$ and some $z\ge\max\{\alpha,1-\alpha\}$.
\end{enumerate}
\end{thm}    
\begin{proof}
The equivalence \ref{proport1}$\iff$\ref{proport2} is clear from \eqref{ort0}.
It is straightforward to verify that 
\ref{proport1} yields 
$\oll D_{\alpha}^{\test}\bz\omega^{(0)}\big\|\omega^{(1)}\jz=+\infty$ for every
$\alpha\in(0,1)$, proving \ref{proport1}$\imp$\ref{proport3}. 
The implication \ref{proport3}$\imp$\ref{proport4} is obvious. 

Assume \ref{proport4}, i.e., that 
$\oll D_{\alpha}^{\test}\bz\omega^{(0)}\big\|\omega^{(1)}\jz=+\infty$ for some
$\alpha\in(0,1)$.
Then there exists a test sequence $T_n\in\B(\hil)_{[1,n]}$, $n\in\bN$, 
and a sequence $c_n\in[0,+\infty)$, $n\in\bN$, with $\lim_n c_n=+\infty$,
such that 
\begin{align*}
e^{-(1-\alpha)nc_n}
&=
(\Tr \omega^{(0)}_{n} T_n)^{\alpha}(\Tr \omega^{(1)}_{n} T_n)^{1-\alpha}
+
(\Tr\omega^{(0)}_{n} (I-T_n))^{\alpha}(\Tr \omega^{(1)}_{n} (I-T_n))^{1-\alpha}\\
&\ge
\min\left\{(\Tr \omega^{(1)}_{n} T_n)^{1-\alpha},(\Tr \omega^{(1)}_{n}(I-T_n))^{1-\alpha}\right\}
\underbrace{\bz (\Tr \omega^{(0)}_{n} T_n)^{\alpha}+(\Tr \omega^{(0)}_{n} (I-T_n))^{\alpha}\jz}_{\ge 1},
\end{align*}
where $\omega^{(k)}_n:=\omega^{(k)}_{[1,n]}$. Let us define a new test sequence
$\tilde T_n:=T_n$ if $\Tr\omega^{(1)}_{n} T_n\le 1/2$, and 
$\tilde T_n:=I-T_n$ otherwise. Then the above yields
$\Tr \omega^{(1)}_{n} \tilde T_n\le e^{-nc_n}$, which goes to zero super-exponentially, and
\begin{align*}
e^{-(1-\alpha)nc_n}
&\ge
(\Tr\omega^{(0)}_{n} (I-\tilde T_n))^{\alpha}(\Tr \omega^{(1)}_{n} (I-\tilde T_n))^{1-\alpha}
\ge
(1-e^{-nc_n})^{1-\alpha}\Tr\omega^{(0)}_{n} (I-\tilde T_n))^{\alpha},
\end{align*}
whence $\Tr\omega^{(0)}_{n} (I-\tilde T_n)$ also goes to zero super-exponentially in $n$.
Thus, we obtain \ref{proport1}.

According to \cite[Theorem 1.1]{Hi3} and a standard argument deriving monotonicity under CPTP 
maps from joint convexity, the R\'enyi $(\alpha,z)$-divergences are monotone non-increasing 
under the joint action of a CPTP map on both of their arguments when 
$\alpha\in(0,1)$ and $\max\{\alpha,1-\alpha\}\le z$.
This immediately implies
$\oll D_{\alpha,z}\bz\omega^{(0)}\big\|\omega^{(1)}\jz
\ge
\oll D_{\alpha}^{\test}\bz\omega^{(0)}\big\|\omega^{(1)}\jz$
for any such $\alpha,z$, and thus the implication 
\ref{proport3}$\imp$\ref{proport5} follows, and
\ref{proport5}$\imp$\ref{proport6} is trivial.

Finally, \ref{proport6}$\imp$\ref{proport1}
follows immediately from Corollary \ref{cor:Aud2}.
\end{proof}    

\begin{rem}
It is clear from the proof of Theorem \ref{thm:ort} that the following also holds.
If $n_1<n_2<\ldots$, and $\divv=\hh$, $\divv=D_{\alpha,z}$ with $\alpha\in(0,1)$ and 
$z\ge\max\{\alpha,1-\alpha\}$, or $\divv=D_{\alpha}^{\test}$ with $\alpha\in(0,1)$, then the following are equivalent:
\begin{enumerate}
\item\label{subsequence1}
There exists a sequence of tests $T_{n_k}\in\B(\hil)_{[1,n_k]}$, $k\in\bN$, such that 
\begin{align*}
\lim_{k\to+\infty}-\frac{1}{n_k}\log\Tr\omega^{(0)}_{[1,n_k]}(I-T_{n_k})=+\infty
=
\lim_{k\to+\infty}-\frac{1}{n_k}\log\Tr\omega^{(1)}_{[1,n_k]}T_{n_k}.
\end{align*}
\item\label{subsequence2}
$\displaystyle{\lim_{k\to+\infty}\frac{1}{n_k}\divv\bz
\omega^{(0)}_{[1,n_k]}\big\|\omega^{(1)}_{[1,n_k]}\jz=+\infty}$.
\end{enumerate}

Indeed, \ref{subsequence1} is equivalent to \ref{subsequence2}
with $\divv=\hh$ due to \eqref{ort0}, which implies 
\ref{subsequence2} with $\divv=D_{\alpha}^{\test}$ for any given $\alpha\in(0,1)$;
this implies \ref{subsequence2} with $\divv=D_{\alpha,z}$ for the same $\alpha$
and any $z\ge\max\{\alpha,1-\alpha\}$, due to monotonicity under CPTP maps;
and finally this implies \ref{subsequence2} with $\divv=\hh$
due to Corollary \ref{cor:Aud2}.  

In particular, Theorem \ref{thm:ort} remains valid if we replace the $\liminf$
with $\limsup$ in the definition of the error exponents 
in \eqref{errexp}, and define 
super-exponential distinguishability accordingly, and we also 
replace the $\liminf$ with $\limsup$ in
the definition of the regularized distinguishability measures in 
\eqref{regmeas}. 
\end{rem}

\begin{rem}
Two states (positive linear normalized functionals)
$\omega^{(0)}$ and $\omega^{(1)}$ on a $C^*$-algebra 
are defined to be orthogonal in \cite[Definition 1.14.1]{Sakai-book} if 
$\norm{\omega^{(0)}-\omega^{(1)}}=2$, where the norm is the usual functional norm; this is equivalent to $\hh(\omega^{(0)}\big\|\omega^{(1)})=+\infty$ in our notation.
The above arguments show that this notion of orthogonality may not be the best suited for the study of asymptotic state discrimination on an infinite spin chain; in particular, 
any two translation-invariant product states are orthogonal according to this definition, irrespective of whether the density operators of their local restrictions are orthogonal or not. 

In contrast, if we define $\omega^{(0)}$ and $\omega^{(1)}$ on an infinite spin chain to 
be orthogonal if
$\oll\hh(\omega^{(0)}\big\|\omega^{(1)})=+\infty$ then for 
translation-invariant product states
this becomes equivalent to the usual orthogonality of their single-site restrictions
$\omega^{(0)}_{[1]}$ and $\omega^{(1)}_{[1]}$. Another appealing feature of this 
notion of orthogonality of states is that it is equivalent to various regularized
distinguishability measures being $+\infty$, according to Theorem \ref{thm:ort}, 
which gives a nice generalization of the analogous single-site characterizations
of orthogonality given in \eqref{ort alphaz} and \eqref{ort testmeas}.
Of course, this notion of orthogonality is limited to pairs of translation-invariant states
on an infinite spin chain, and does not make sense in general for pairs of states on an abstract $C^*$-algebra.
\end{rem}

\begin{rem}
Clearly, if any (and hence all) of \ref{proport1}--\ref{proport6} in Theorem 
\ref{thm:ort} holds then we have 
$\oll{\mathbb{D}}_{\alpha}(\omega^{(0)}\big\|\omega^{(1)})=+\infty$ for any quantum R\'enyi 
$\alpha$-divergence with $\alpha\in(0,1)$ that is monotone non-increasing under 
$2$-outcome measurements, i.e., under the type of CPTP maps given in \eqref{testmap}.
Here, we say that $\mathbb{D}_{\alpha}$ is a quantum R\' enyi $\alpha$-divergence if it 
is defined on all pairs of density operators on any finite-dimensional Hilbert space, and for commuting states it reduces to the classical R\'enyi $\alpha$-divergence 
of the diagonal elements of the two density operators in a common eigen-basis.
One such example is Matsumoto's maximal $\alpha$-divergence \cite{Matsumoto_newfdiv}
for every $\alpha\in(0,1)$; however, at the moment we do not know 
if the regularized maximal $\alpha$-divergence being $+\infty$ implies the other 
properties listed in Theorem \ref{thm:ort}.
\end{rem}

\section{Conclusion}

We have shown that translation-invariant quasi-free states with defining functions
$\fv q$ and $\fv r$ are super-exponentially distinguishable if there is an interval
$[\lep,\uep]$ of non-zero length such that one of the functions is constant $0$
and the other one is constant $1$ on this interval. We have shown that in this case 
both errors decreases at least as fast as $e^{-nc\log n}$ in the sample size $n$; it is 
however, an open question whether this is in fact the optimal asymptotics, or a faster 
decrease, e.g., $e^{-cn^{1+\delta}}$ with some $\delta>0$ can be attained.
This can be asked for the class of functions that we considered, but it is also natural to 
ask if there is any upper bound on the speed of convergence to zero for 
general pairs of translation-invariant states on a spin chain.

It is known that a translation-invariant quasi-free state $\omega_Q$ 
is pure (i.e., an extremal point of the convex set of states) if and only if the 
corresponding function $\fv q$ is an indicator function, 
i.e., $\fv q=\egy_{B_Q}$ for some measurable subset $B_Q$ of $[0,2\pi)$
(see, e.g., \cite{Fannes-CAR}).
Two such pure states $\omega_Q$ and $\omega_R$ are different if and only if 
$B_Q$ and $B_R$ are different in the measure-theoretic sense, i.e., the Lebesgue measure of
$(B_Q\setminus B_R)\cup(B_R\setminus B_Q)$ is positive. This motivates to ask whether the following extension of our result is true: If 
$\fv q$ and $\fv r$ are measurable functions from $[0,2\pi)$ to $[0,1]$
such that there exists a measurable set $B\subseteq[0,2\pi)$ of positive Lebesgue measure on which $\fv q$ is constant $0$ and $\fv r$ is constant $1$ then 
$\omega_Q$ and $\omega_R$ can be super-exponentially distinguished. 
In particular, this would imply the super-exponential distinguishability 
of any two different pure translation-invariant quasi-free states.
    
\section*{Acknowledgments}

This work was partially funded by the
National Research, Development and 
Innovation Office of Hungary via the research grants K 124152, KH 129601, and
FK 135220, and
by the Ministry of Innovation and
Technology and the National Research, Development and Innovation
Office within the Quantum Information National Laboratory of Hungary.

\appendix
\section{A variant of Audenaert's inequality}

It was shown in \cite[Theorem 1]{Hoeffding1} that for any two density operators 
$\rho,\sigma$ on a finite-dimensional Hilbert space $\hil$,
\begin{align*}
1-\half\norm{\rho-\sigma}_1\le\Tr\rho^{\alpha}\sigma^{1-\alpha},\ds\ds\ds\alpha\in(0,1),
\end{align*}
or equivalently,
\begin{align}\label{Aud}
\hh(\rho\|\sigma):=-\log\bz 1-\half\norm{\rho-\sigma}_1\jz
\ge(1-\alpha)D_{\alpha,1}(\rho\|\sigma).
\end{align}
(See Section \ref{sec:ort} for the definition of the R\'enyi 
$(\alpha,z)$-divergences.) We will need a simple extension of the above, given in Corollary \ref{cor:Aud2} below, which follows by a 
combination of \eqref{Aud}, the Araki-Lieb-Thirring (ALT) inequality \cite{Araki,LT}, 
and its 
converse given in \cite{Aud-ALT}.
Recall that the ALT inequality states that for any two positive semi-definite operators
$A,B\in\B(\hil)$,
\begin{align}\label{ALT}
\Tr\bz A^rB^rA^r\jz^q\le\Tr\bz ABA\jz^{rq},\ds\ds\ds
q\in[0,+\infty),\ds r\in[0,1].
\end{align}
The converse given in \cite[Theorem 2]{Aud-ALT} states that 
\begin{align}\label{CALT}
\Tr(ABA)^{rq}\le\bz\Tr\bz A^rB^rA^r\jz^q\jz^r\bz\norm{A}^{2rq}\Tr B^{rq}\jz^{1-r},
\ds\ds\ds
q\in[0,+\infty),\ds r\in[0,1].
\end{align}

\begin{lemma}
Let $\rho,\sigma$ be density operators on a finite-dimensional Hilbert space $\hil$.
For any $\alpha\in(0,1)$,
\begin{align}\label{alpha-z bounds}
D_{\alpha,1}(\rho\|\sigma)\ge
\begin{cases}
D_{\alpha,z}(\rho\|\sigma),&z\in(0,1],\\
\frac{1}{z}D_{\alpha,z}(\rho\|\sigma)-\frac{\alpha}{(1-\alpha)^2}\frac{z-1}{z}\log\dim\hil,&z>1.
\end{cases}
\end{align}
\end{lemma}
\begin{proof}
Let $z'\le z''$,
$A:=\rho^{\frac{\alpha}{2z'}}$, 
$B:=\sigma^{\frac{1-\alpha}{z'}}$, $q:=z''$, $r:=z'/z''$.
Then \eqref{ALT} gives 
\begin{align*}
\Tr\bz\rho^{\frac{\alpha}{z''}}
\sigma^{\frac{1-\alpha}{z''}}\rho^{\frac{\alpha}{z''}}\jz^{z''}
\le
\Tr\bz\rho^{\frac{\alpha}{z'}}
\sigma^{\frac{1-\alpha}{z'}}\rho^{\frac{\alpha}{z'}}\jz^{z'},
\end{align*}
or equivalently, $D_{\alpha,z''}(\rho\|\sigma)\ge D_{\alpha,z'}(\rho\|\sigma)$.
In particular, the choice $z'':=1$, $z':=z$ yields the first inequality in \eqref{alpha-z bounds}.

Now, let $z':=1$, $z'':=z>1$. Then \eqref{CALT} gives
\begin{align}\label{alpha-z bounds proof}
\Tr\rho^{\alpha}\sigma^{1-\alpha}\le
\bz\Tr\bz\rho^{\frac{\alpha}{z}}
\sigma^{\frac{1-\alpha}{z}}\rho^{\frac{\alpha}{z}}\jz^{z}\jz^{1/z}
\underbrace{\bz\norm{\rho}^{\alpha}\jz^{1-1/z}}_{\le 1}
\bz\Tr\sigma^{1-\alpha}\jz^{1-1/z}.
\end{align}
Let $\lambda_1,\ldots,\lambda_d$ be the eigen-values of $\sigma$, where $d:=\dim\hil$. 
Since $1/(1-\alpha)>0$, we get 
\begin{align*}
(\Tr\sigma^{1-\alpha})^{\frac{1}{1-\alpha}}
=
\bz\sum_{i=1}^d\lambda_i^{1-\alpha}\jz^{\frac{1}{1-\alpha}}
=
\bz d\sum_{i=1}^d\frac{1}{d}\lambda_i^{1-\alpha}\jz^{\frac{1}{1-\alpha}}
\le
d^{\frac{1}{1-\alpha}}\sum_{i=1}^d\frac{1}{d}\lambda_i=
d^{\frac{\alpha}{1-\alpha}}.
\end{align*}
Writing this back into \eqref{alpha-z bounds proof} we get 
\begin{align*}
\underbrace{\frac{1}{\alpha-1}\log \Tr\rho^{\alpha}\sigma^{1-\alpha}}_{=D_{\alpha,1}(\rho\|\sigma)}
\ge
\frac{1}{z}\underbrace{\frac{1}{\alpha-1}\log\Tr\bz\rho^{\frac{\alpha}{z}}
\sigma^{\frac{1-\alpha}{z}}\rho^{\frac{\alpha}{z}}\jz^{z}}_{=
\frac{1}{z}D_{\alpha,z}(\rho\|\sigma)}
-\frac{\alpha}{(1-\alpha)^2}\frac{z-1}{z}\log d,
\end{align*}
which is exactly the second inequality in \eqref{alpha-z bounds}.
\end{proof}

\begin{cor}\label{cor:Aud2}
Let $\rho,\sigma$ be density operators on a finite-dimensional Hilbert space $\hil$.
For every $\alpha\in(0,1)$,
\begin{align*}
\hh(\rho\|\sigma)
\ge 
\begin{cases}
(1-\alpha)D_{\alpha,z}(\rho\|\sigma),&z\in(0,1],\\
\frac{1}{z}(1-\alpha)D_{\alpha,z}(\rho\|\sigma)-\frac{\alpha}{1-\alpha}\frac{z-1}{z}\log\dim\hil,&z>1.
\end{cases}
\end{align*}
\end{cor}
\begin{proof}
Immediate from \eqref{Aud} and \eqref{alpha-z bounds}.
\end{proof}

\bibliography{bibliography220318}

\end{document}